\documentclass[11pt]{article}

\usepackage[]{amsmath,amssymb,amsfonts,latexsym,amsthm,enumerate,url}
\date{}



\usepackage{amsmath, amsthm, amssymb}
\usepackage{geometry}
\geometry{tmargin=22mm,bmargin=22mm,lmargin=22mm,rmargin=22mm}
\usepackage[dvips]{graphics}
\newcommand{\bra}[1]{\langle #1|}
\newcommand{\ket}[1]{|#1\rangle}
\newcommand{\braket}[2]{\langle #1|#2\rangle}

\newcommand{\Tr}{\textrm{tr}}
\newcommand{\vmin}{\textrm{vmin}}
\newcommand{\vmax}{\textrm{vmax}}
\newcommand{\saux}{\textrm{aux}}



\newcommand{\sizeD}{\textrm{D}}

\newcommand{\ups}{\textrm{up}}
\newcommand{\downs}{\textrm{down}}

\newcommand{\failprob}{\textrm{FailProb}}

\newcommand{\fail}{\textrm{fail}}

\newtheorem{thm}{Theorem}[section]

\newtheorem{lem}[thm]{Lemma}
\theoremstyle{remark}

\theoremstyle{definition}

\begin{document}
  \title{Quantum Multi Prover Interactive Proofs with Communicating Provers \\ {\small Extended Abstract}}
  \author{Michael Ben Or \thanks{benor@cs.huji.ac.il, The Hebrew University, Jerusalem, Israel}
  \quad{Avinatan Hassidim \thanks{avinatan@cs.huji.ac.il, The Hebrew University, Jerusalem, Israel.
  Part of the work was done while visiting the Perimeter Institute and MIT}}
  \quad{Haran Pilpel \thanks{haranp@math.huji.ac.il, The Hebrew University, Jerusalem, Israel}}}
  \maketitle

\textheight 7.62in

\maketitle

\thispagestyle{empty}

\begin{abstract} Multi Prover Interactive Proof systems (MIPs)
were first presented in a cryptographic context, but ever since
they were used in various fields. Understanding the power of MIPs
in the quantum context raises many open problems, as there are
several interesting models to consider. For example, one can study
the question when the provers share entanglement or not, and the
communication between the verifier and the provers is quantum or
classical. While there are several partial results on the subject,
so far no one presented an efficient scheme for recognizing NEXP
(or NP with logarithmic communication), except for \cite{KM03}, in
the case there is no entanglement (and of course no communication
between the provers).

We introduce another variant of Quantum MIP,
where the provers do not share entanglement, the communication
between the verifier and the provers is quantum, but the provers
are unlimited in the {\em classical} communication between them.
At first, this model may seem very weak, as provers who exchange
information seem to be equivalent in power to a simple prover.
This in fact is not the case---we show that any language in NEXP
can be recognized in this model efficiently, with just two provers
and two rounds of communication, with a constant
completeness-soundness gap.

The main idea is not to bound the information the provers exchange
with each other, as in the classical case, but rather to prove
that any ``cheating'' strategy employed by the provers has
constant probability to diminish the entanglement between the
verifier and the provers by a constant amount. Detecting such
reduction gives us the soundness proof. Similar ideas and
techniques may help help with other models of Quantum MIP,
including the dual question, of non communicating provers with
unlimited entanglement.
\end{abstract}

\newpage
\pagebreak \setcounter{page}{1}


\section{Introduction}\label{sec:intro}
%
Multi Prover Interactive Proofs (MIPs) have been studied
extensively in the classical setting, and provide an exact
characterization of NEXP \cite{BFL92}. Extending MIPs to the
quantum setting poses many important open problems, and may give
us more intuition regarding the power of entanglement. There are
several possible generalizations for quantum multi-prover schemes,
which differ in the power of the verifier (which can be quantum or
classical), and in the relation between the provers (for example,
how much entanglement they have).  The first results for this
problem were given by Kobayashi and Matsumoto \cite{KM03}. They
proved that as long as the provers share a bounded (polynomial)
amount of entanglement, the set of languages which can be
recognized is contained in NEXP (even if the verifier is quantum).

We do not understand the power of the model when the verifier is
classical and the provers share (limited or unlimited)
entanglement. In particular, Cleve et al. \cite{CHTW04} provide
examples where the proof is valid if the provers share no
entanglement, but is no longer sound when they do. Preda
\cite{Pre} showed that if the provers are not limited to quantum
entanglement, but instead have an unlimited amount of nonlocal
boxes \cite{PR97}, then the set of recognizable languages is
contained in EXP.

There are also some positive results when the provers are quantum.
Cleve et al. \cite{CGJ07} provide a proof system for NP when the
verifier is classical and the provers who share an unlimited
amount of entanglement. The proof scheme provides a constant gap,
but the communication is linear. Kempe et al. \cite{KKMTV07} give
a quantum protocol for recognizing languages in NP by a quantum
verifier with logarithmic communication, when the provers share
unlimited entanglement. However, when $x \notin L$ the probability
that the verifier will discover this is $1 - O(1/n)$, which means
that it is necessary to repeat the protocol a polynomial number of
times to get constant soundness. Ito et al. \cite{IKPSY07} use
this result, and give a $3$ prover proof system for NEXP which is
resistant to entanglement with soundness of just $1 - 2^{-poly}$.

\subsection{Our Results}
An important assumption underlying the work on multi prover
schemes is that the provers are not allowed to pass information
between themselves. The results of \cite{KW00,Pre} could lead us
to believe that a proof system with a quantum verifier and two
provers who can pass classical information between them is limited
to EXP. Surprisingly, this is not the case (assuming EXP $\neq$
NEXP). We show that:

\begin{thm} \label{main-thm}
Let $V$ be a polynomial time verifier that can exchange quantum
messages with two computationally unbounded provers. The provers
share no entanglement, but can freely communicate classically
between them. Then for any $L \in $ NEXP there is a two round
protocol for the verifier and provers such that for any string $x$
\begin{itemize}
\item (completeness) If $x \in L$ then there are two prover
strategies such that $V$ will accept $x$ with probability 1.

\item (soundness) If $x \notin L$ then for any two prover
strategies the probability that $V$ will accept $x$ is at most $c$
for some constant $0 < c < 1$.
\end{itemize}

The communication between the verifier and the provers is is
polynomial in the length of the input\footnote{Equivalently we can
state our result for NP, bounding the communication to be
logarithmic.}.
\end{thm}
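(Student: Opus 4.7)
The plan is to take a classical two-prover one-round MIP for NEXP with constant soundness gap (e.g.\ the Babai--Fortnow--Lund protocol) and wrap each question in a quantum authentication code in order to force the two provers to act essentially independently at the quantum level, even though they may exchange arbitrary classical messages. The classical protocol supplies a distribution $D$ on question pairs $(q_1,q_2)$, an acceptance predicate $V(q_1,q_2,a_1,a_2)$, and a constant soundness bound $s<1$ against non-communicating provers; the job is to simulate this non-communication assumption quantumly, using the freedom to send and receive quantum messages.

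Concretely, the verifier samples $(q_1,q_2)\sim D$ and, for each $i\in\{1,2\}$, sends $P_i$ an authenticated encoding of $\ket{q_i}$ (using, say, a Clifford-based authentication scheme or the Barnum--Cr\'epeau--Gottesman--Smith--Tapp construction), keyed by a fresh random string that only $V$ stores. Prover $P_i$ is meant to compute the classical answer $a_i(q_i)$ coherently into an answer register. In the second round, $V$ receives back both the answer register and the original authenticated payload, decodes, tests the authentication traps, and accepts iff $V(q_1,q_2,a_1,a_2)$ holds. Completeness is immediate: the honest strategy is a classical permutation applied in superposition, it commutes with the authentication code, and so it preserves every trap exactly.

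Soundness rests on the quantum authentication theorem applied to each prover separately. That theorem gives an ``either/or'' dichotomy: up to a small error $\varepsilon$, any action of $P_i$ on the authenticated register either (i) is caught by the trap test with probability at least $1-\varepsilon$, or (ii) acts as a ``logical'' map that depends only on the classical plaintext $q_i$ and a private ancilla, and crucially leaks no information about $q_i$ to anything outside the prover's workspace. If both provers are in case (ii), then since they share no prior entanglement and the only interprover channel is classical, their joint strategy can be simulated by a classical pair of strategies in which $P_i$ sees only $q_i$ together with a shared random tape --- exactly the regime in which the classical soundness bound $s$ applies. Combining the two cases over the two provers gives acceptance probability at most $s+O(\varepsilon)$ when $x\notin L$, which is a constant less than $1$ for appropriate parameters.

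The main obstacle, and where most of the work will go, is making the above dichotomy quantitatively tight against \emph{adaptive} adversaries who interleave quantum operations on their encoded registers with classical messages to their partner. For instance, $P_1$ could attempt a gentle measurement that leaks a small amount of information about $q_1$ to $P_2$ while paying only a small authentication penalty, and $P_2$ could adapt its own gentle measurement in turn. The core of the argument, as advertised in the abstract, is to show that any extraction of a constant amount of useful information about $q_i$ from the authenticated register must ``diminish the entanglement between $V$ and the provers by a constant amount'', which the verifier's trap measurement detects with constant probability. Once this tradeoff is proved, balancing the authentication parameter $\varepsilon$ against the classical gap $1-s$ yields the theorem with a fixed constant soundness bound $c<1$ and polynomial communication.
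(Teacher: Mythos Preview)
Your high-level plan---reduce to a classical two-prover game and use a quantum wrapper on the questions to keep the provers from sharing them---matches the paper's strategy, but the concrete primitive you choose cannot do the job, and the failure is already visible in the completeness argument. Standard quantum authentication schemes (Clifford codes, BCGST, trap codes) \emph{encrypt} the logical payload as part of authenticating it: the key contains a quantum one-time pad, and a prover without the key sees the maximally mixed state. That encryption is precisely what would give you the ``no information about $q_i$ leaks out'' half of your dichotomy, but it also means the honest prover cannot implement $\ket{q}\ket{0}\mapsto\ket{q}\ket{a(q)}$. This map is not the identity, and it does not commute with a random Clifford or a random Pauli; the authentication security statement says that the only attacks that survive decoding are (close to) the identity channel, not ``identity on the question tensored with a plaintext-controlled permutation on an answer register.'' If instead you drop the encryption layer so the prover can read $q_i$, then since $\ket{q_i}$ is a computational basis state she can measure it nondestructively, forward the classical value to her partner, and return the register untouched---no trap is disturbed. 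Either way the scheme breaks: with encryption completeness fails, without it soundness fails.

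The paper resolves this tension with a different primitive. The verifier sends each prover an \emph{unencoded} superposition of two questions, $\frac{1}{\sqrt{2}}(\ket{yy}+\ket{\tilde y\tilde y})$, keeping one half of the entangled pair privately. The honest map $\ket{c}\ket{000}\mapsto\ket{c}\ket{T(c)}$ acts in the clear, so completeness is trivial. For soundness, the verifier later reveals $y,\tilde y$ classically, obtains $T(y),T(\tilde y)$, and SWAP-tests the returned register against $\frac{1}{\sqrt{2}}(\ket{yy}\ket{T(y)}+\ket{\tilde y\tilde y}\ket{T(\tilde y)})$. The technical core (Lemma~\ref{alice-damage}) is a tailored information--disturbance bound for this two-point state: any separable measurement outcome $k$ that skews the posterior on $y$ versus $\tilde y$ by a ratio $p$ forces SWAP-test failure with probability at least $\tfrac{1}{2}-\tfrac{\sqrt{p}}{1+p}$. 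The rest of Section~\ref{sec-soundness} shows that either such skew occurs on a constant-probability set of verifier queries, or the posterior over clauses and variables is nearly uniform on a $(1-O(\gamma))$-fraction, in which case the provers' behavior can be simulated by non-communicating classical provers and the PCP gap applies. This is exactly the dichotomy you sketched, but realized through a superposition-plus-SWAP-test rather than an authentication code, because the former lets the honest prover compute on the plaintext while still penalizing any measurement that localizes it.
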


We note that augmenting the provers in our model with unlimited
entanglement gives something which is contained in EXP \cite{KW00}
(as this is equivalent to quantum communication and thus to a
single quantum prover). Bounding the verifier to be classical,
would limit us to languages in PSPACE \cite{Sha90} (as in this
scenario is equivalent to a single prover and a classical
verifier), so both conditions are necessary. This problem is in a
way dual to the scenario where the provers do not have any means
of communication but instead have unlimited entanglement, where
much less is known.


Quantum MIPs are thought to be a model of computation which may
give us better understanding of entanglement, and its powers.
Surprisingly, our result, which is stated in a model with no
entanglement between the provers, is based on following the
entanglement between the provers and the verifier. Each message
the verifier sends is a superposition of two classical queries.
Measuring the message would ruin the superposition, and will be
caught by the verifier. However, a strategy which does not measure
it ``enough'' does not extract enough useful classical
information, and prevents the provers from coordinating answers
via the classical channel. Most of the paper follows the amount of
entanglement between the verifier and the provers during the
protocol, making sure that either the provers do not extract
enough information to answer with very high probability (we note
that from an information-theoretic point of view they extract many
bits of information--so we use tailored bounds), or they have
some chance of getting caught.


%
%

\subsection{Related Work}
It is interesting to view the results of this paper in light of
the complexity class QMA(2), defined by Kobayashi, Mastumoto and
Yamakami \cite{KMY01}. Intuitively, this is the class of languages
which can be recognized by a quantum verifier with two unentangled
bounded pieces of quantum evidence. While there is no classical
analog for this problem (having two classical witnesses is still
NP), there is evidence that QMA(2) strictly contains QMA
\cite{LCV07}. Blier and Tapp \cite{BT07} showed that a verifier
can recognize an NP complete language with soundness $1 -
O(1/n^6)$. A constant soundness completeness gap in their results
would imply our own. We note however, that Aaronson et al.
\cite{ABDFS08} give evidence towards QMA(2)$\subseteq$PSPACE, and
therefore we do not expect that this is the case.

The idea of using Private Information Retrieval
\cite{CGKS95,KdW03,KdW04} schemes (PIRs) has been suggested by
Cleve et al. \cite{CGJ07}. Our protocol is in a sense a cheat
sensitive PIR where the verifier can check whether the prover has
tried to learn information. A similar quantum PIR scheme has been
independently presented by \cite{GLM07} in a different context. It
is important to note that information disturbance tradeoffs
proposed by such quantum PIR schemes are by themselves insufficient to
prove the soundness of our multi-prover protocol, since the
leakage of even a small amount of information might enable the
provers to succeed in cheating the verifier.

\section{Preliminaries}
We assume the reader is familiar with quantum computation (see
\cite{NC00} for example).

Let $L \in $ NEXP. By standard PCP machinery, we can assume that
given $x$ the verifier has an implicit efficient access to an
exponentially long 3-SAT formula $\Phi$, such that if $x \in L$
then $\Phi$ is satisfiable, and otherwise any assignment can satisfy at
most a $1 - \gamma$ proportion of the clauses for some constant
$\gamma > 0$. We can also assume that each variable appears
exactly 5 times, and each clause contains three different
variables. Let $C$ denote the set of clauses and $V$ the set of
variables. If a variable $v \in V$ appears in a clause $c\in C$ we
write $v \in c$. Let $M = |C|$ denote the number of clauses and $N
= |V|$ the number of variables. Let $T$ be a truth assignment for
$\Phi$. For a variable $x \in V$, let $T(x)$ denote the value $T$
assigns $x$. For a clause $y \in C$, if $y$ contains the variables
$v^y_1, v^y_2, v^y_3$, let $T(y) = T(v^y_1), T(v^y_2), T(v^y_3)$.


Alice (Bob) has a private Hilbert space $H^{p}_{A}$ ($H^{p}_{B}$),
with some finite arbitrarily large dimension $d$ (we assume
without loss of generality that the dimensions are identical). The
messages between Alice (Bob) and the verifier will be sent by
passing a state which is in a Hilbert space $H^{m}_{A}$
($H^{m}_{B}$). For convenience, we partition the private Hilbert
space of the verifier into three parts, $H_v = H^{\saux}_v \otimes
H^{v}_A \otimes H^{v}_B$. The Hilbert spaces $H^{v}_A, H^{v}_B$
will be used with messages sent to different provers, but they are
private spaces that belong to the verifier. We let the verifier
send and receive classical messages from Alice\footnote{This can
be done by using a larger space $H^{m}_{A}$, with the verifier
measuring the part of the space which should be used for the
classical message. Thus, this does not change the model, and is
only done for clarity.}. For the protocol we present, the
dimensions of the Hilbert spaces used are $\dim({H^{m}_{A}}) =
8M$, which would fit a clause $y$ and $T(y)$, $\dim(H^{v}_A) = M$,
$\dim({H^{m}_{B}}) = 2N$ which would fit a variable and the value
it is assigned, and $\dim(H^{v}_B) = N$.

\section{Algorithm}

Let $\pi$ be a probability distribution which chooses two clauses
$y, \tilde y$ uniformly at random from $C$, and two variables $x,
\tilde x$ uniformly at random from $V$, with the constraint that $x$
appears in $y$ ($\tilde x$ does not necessarily appear in $\tilde
y$).



{\bf Protocol for verifier}

\begin{enumerate}

\item Sample $\pi$ to get $y, \tilde y, x,  \tilde x$. Generate
the states on $O(\log(N))$ qubits
\[
\frac{1}{\sqrt{2}} (\ket{yy} + \ket{\tilde y \tilde y}) \otimes
\ket{000} \in H^{v}_A \otimes H^{m}_{A}
\]
\[
\frac{1}{\sqrt{2}} (\ket{xx} + \ket{\tilde x \tilde x}) \otimes
\ket{0} \in H^{v}_B \otimes H^{m}_{B}
\]
Send Alice (Bob) the message space $H^{m}_{A}$ ($H^{m}_{B}$),
which consists of the last $m+3$ ($n+1$) qubits.

\item Let $T$ be a satisfying assignment for $\Phi$ (if one
exists). Alice should apply the unitary which takes $\ket{c}
\otimes \ket{000} \rightarrow \ket{c} \otimes \ket{T(c)}$ for any
clause $c \in C$, and Bob should apply the unitary which takes
$\ket{v} \otimes \ket{0} \rightarrow \ket{v} \ket{T(v)}$ for $v
\in V$. In fact the provers apply any Local Operations and
Classical Communication protocol they want among themselves.
Finally, Alice (Bob) returns the verifier the message space
$H^{m}_{A}$ ($H^{m}_{B}$).

\item Send Alice the classical values $y, \tilde y, x, \tilde x$.
Alice returns $8$ bits: $T(y), T(\tilde y), T(x), T(\tilde x)$. If
Alice returned quantum values, the verifier measures them according
to the standard basis.

\item Verify that the clause $y$ is satisfied, and that $T(x)$
matches $T(y)$. Perform the SWAP test \cite{BCWW01} between the
state in $H^{v}_A \otimes H^{m}_A$ and $\frac{1}{\sqrt{2}}
(\ket{yy} \otimes \ket{T(y)} + \ket{\tilde y \tilde y} \otimes
\ket{T( \tilde y)}$ and between the state in $H^{v}_B \otimes
H^{m}_B$ and $\frac{1}{\sqrt{2}} (\ket{xx} \otimes \ket{T(x)} +
\ket{\tilde x \tilde x} \otimes \ket{T( \tilde x)}$. Accept if all
tests passed.

\end{enumerate}
Note that the verifier does not generate any entanglement between
the provers. This means that it is possible to repeat the protocol
in order to reduce the error probability.

Completeness: With a common satisfying assignment the provers can
apply the required quantum transformation, and all the tests will
be passed with probability 1.

\section{Soundness of the Protocol}\label{sec-soundness}
\noindent{\bf Intuition} To simplify the analysis, we modify the
protocol. First, we purify the verifier. This will enable us to
talk about the probability of a set of queries given measurements
by the provers. The second modification will be to strengthen the
provers, allowing them to perform any joint separable measurement
instead of Local Operations and Classical Communication (LOCC),
which will enable us to write the state after their actions. We
prove that the provers have a constant failure probability for any
result $k$ of the separable measurement they make. We begin by
finding an estimate for the probability that the verifier measures
$(y, \tilde y, x , \tilde x)$ as a function of the provers' result
$k$. Next, we show that if $k$ is more probable given a clause
$y_1$ then given another result $y_2$, and the verifier measured
$(y_1, y_2, x , \tilde x)$ for any $x, \tilde x$, then there is
constant probability that Alice fails the SWAP test (because such
a measurement operator diminishes the entanglement between $H^v_A$
and $H^m_A$).

We then show that either the measurement has a constant
probability to diminish the entanglement, or after it there is
still a large set of clauses (and variables which appear in them)
which are all ``almost uniformly'' probable. The set will be large
enough that no assignment will satisfy all of it. This means that
if the provers succeed with very high (but constant) probability,
they must succeed on a large portion of this ``uniform'' set, and
thus they must succeed on a very large number of clauses and
variables.  This will give a strategy for the classical protocol
which has success probability greater then $1 - \gamma/3$, which
is a contradiction.

{\bf The Modified Protocol} As stated above, the first
modification is to purify the sampling of $\pi$, postponing it
until after the provers act on the information. It uses
$H^{\saux}_v$ with $\dim(H^{\saux}_v) = M^2 N^2$. The verifier
generates
\[
\psi_{\pi} = \sum_{y,  \tilde  y \in C} \sum_{x \in y} \sum_{ \tilde
x \in V} \ket {y \tilde y, x  \tilde  x} \otimes \frac{1}{\sqrt{2}}
(\ket{yy} + \ket{ \tilde y \tilde y}) \otimes \ket{000} \otimes
\frac{1}{\sqrt{2}} (\ket{xx} + \ket{ \tilde x \tilde x}) \otimes
\ket{0} \in H^{\saux}_v \otimes H^{A}_v \otimes H^A_m \otimes H^{B}_v
\otimes H^B_m
\]
As before, the verifier sends Alice (Bob) the Hilbert space
$H^{A}_m$ ($H^{B}_m$). After Alice and Bob act on the message
spaces they get and return $H^{A}_m, H^{B}_m$, the verifier
measures $H^{\saux}_v$ to get $y, \tilde y, x , \tilde x$ and sends
them to Alice as in Protocol 1. This modification does not change
the cheating power of the provers (they cannot tell what protocol
is being used).

The second modification is to replace the LOCC done by the provers
in the first stage with a single joint separable measurement.
\cite{BDFMRSSW98,BNS97} proved that this is strictly stronger than
LOCC. In particular they showed how to transform any LOCC protocol
into such a measurement. As the provers are not entangled, we can
assume that their private spaces are initialized with the state
$\ket{0 \ldots 0}$. Letting $\rho = \ket{\psi_{\pi}}
\bra{\psi_{\pi}}$, the provers' operation now becomes applying a
measurement with operators

\[ (I_{M^2N^2} \otimes I_{M} \otimes A_k \otimes I_N \otimes
B_k)^\dagger (I_{M^2N^2} \otimes I_{M} \otimes A_k \otimes I_N
\otimes B_k) \]

\noindent where $I_p$ is the $p \times p$ identity matrix, $A_k$
is an $8Md \times 8Md$ matrix, $B_k$ is a $2Nd \times 2Nd$ matrix
and
\[ \sum_k (A_k \otimes B_k)^\dagger (A_k \otimes B_k) = I_{16NMd^2} \]
The Hilbert spaces $H^{A}_m, H^{B}_m$ are then returned to the verifier.

We now calculate the probability that the verifier measured values
$r = (y, \tilde y, x, \tilde x)$, conditioned on the fact that the
measurement result was $k$. Denote $A_k(y) = \Tr(A_k (\ket{y}
\bra{y} \otimes I)A_k)$, where we are tracing over the private
qubits of the prover and the qubits which define the assignment,
and similarly $B_k(x) = \Tr(B_k (\ket{x} \bra{x} \otimes I)x_k)$.
In Appendix \ref{calc-prob}, we prove that for $y \neq \tilde y, x
\neq \tilde x$

\begin{equation} \label{exact-prob-of-tuple}
\Pr (y, \tilde y, x, \tilde x |k) = \frac{(A(y) + A(\tilde
y))(B(x) + B(\tilde x))}{\sum_{c, \tilde c \in C, v \in c, \tilde
v \in V} \Pr (c, \tilde c, v, \tilde v |k)}
\end{equation}

\noindent where if $y = \tilde y$ the numerator changes to
$4A(y)(B(x) + B(\tilde x))$, and similarly for $x, \tilde x$.

\noindent We give some intuition for Equation
(\ref{exact-prob-of-tuple}). The numerator is the product of two
factors, because when the verifier measures before the provers
(which is physically equivalent) the provers are unentangled, and
therefore the probability of $k$ is just the $\Tr(A_k \rho
A_k^{\dagger}) \cdot \Tr(B_k \rho B_k^{\dagger})$. Alice's factor
is composed of two terms, because tracing out the verifier Alice
just gets a mixed state of $\frac{1}{2} \ket{y}\bra{y} +
\frac{1}{2} \ket{\tilde y}\bra{\tilde y}$.

Omitting the subindex $k$, and denoting $W_{A_k} = \sum_{i}A_k(i)
= \Tr(A_k), W_{B_k} = \sum_{i}B_k(i) = \Tr(B_k), \tilde W =
\Sigma_{c \in C, v\in c} A_k(c)B_k(v)$ We show the following bound
in In Appendix \ref{calc-prob}, by bounding the denominator

\begin{equation} \label{prob-bound}
\Pr (y, \tilde y, x, \tilde x |k) \geq \frac{A(y)B(x) +
A(y)B(\tilde x) + A(\tilde y)B(x) + A(\tilde y)B(\tilde x)}{2 MN
\tilde W + 22 M W_{A} W_{B}}
\end{equation}

\subsection{Auxiliary Lemmas}
We show that if $A_k$ is too skewed, then for certain values of
$y, \tilde y$, Alice has a good chance of failing the SWAP test.
Formally:

\begin{lem} \label{alice-damage}
Assume $A(y) \ge p A(\tilde y)$, $p > 1$. Then for any assignment
$T$, the probability that the verifier will catch Alice cheating
in the SWAP test is at least $\frac{1}{2} - \frac{\sqrt{p}}{1+p}$.
\end{lem}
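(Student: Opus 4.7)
I condition on the outcome $k$ of Alice's separable measurement and on the verifier's sampled tuple $(y,\tilde y, x, \tilde x)$, then upper-bound the fidelity $F = \bra{\phi}\sigma\ket{\phi}$ between the reduced state $\sigma$ on $H^v_A \otimes H^m_A$ and the ideal comparison state $\ket{\phi} = \tfrac{1}{\sqrt{2}}\bigl(\ket{yy}\ket{T(y)} + \ket{\tilde y\tilde y}\ket{T(\tilde y)}\bigr)$ that the verifier prepares for the SWAP test. The key observation is that the conditioning converts the weight imbalance $A(y)\ge p\,A(\tilde y)$ into an imbalance of Schmidt coefficients of the post-measurement state, which is incompatible with the symmetry of $\ket{\phi}$ and therefore forces the fidelity away from $1$.

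\textbf{Step 1 --- post-measurement state.} Put $\ket{\alpha_c} := A_k\ket{c,000,0\cdots 0} \in H^m_A\otimes H^p_A$, so that $\|\alpha_c\|^2 = A(c)$ in the normalization used to define $A(\cdot)$. Since Alice's operator acts trivially on $H^v_A$ and the two branches $\ket{y}, \ket{\tilde y}$ remain orthogonal there, the normalized pure state on $H^v_A\otimes H^m_A\otimes H^p_A$, conditioned on $k$ and on this sampled tuple, is
\[
\ket{\hat\Psi} \;=\; \frac{1}{\sqrt{A(y)+A(\tilde y)}}\bigl(\ket{y}_{vA}\otimes\ket{\alpha_y} + \ket{\tilde y}_{vA}\otimes\ket{\alpha_{\tilde y}}\bigr).
\]

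\textbf{Step 2 --- fidelity with $\ket{\phi}$.} Let $\ket{\gamma_c} := \bra{c, T(c)}_{mA}\ket{\alpha_c} \in H^p_A$ be the piece of $\ket{\alpha_c}$ that $\ket{\phi}$ sees on the message register. Substituting $\ket{\phi}$ and using orthogonality of $\ket{y}_{vA}, \ket{\tilde y}_{vA}$ to kill the cross terms,
\[
\bra{\phi}\sigma\ket{\phi} \;=\; \bigl\|(\bra{\phi}\otimes I_{pA})\ket{\hat\Psi}\bigr\|^2 \;=\; \frac{\|\gamma_y + \gamma_{\tilde y}\|^2}{2\bigl(A(y)+A(\tilde y)\bigr)}.
\]
The triangle inequality together with the subnormalization $\|\gamma_c\| \le \|\alpha_c\| = \sqrt{A(c)}$ then give
\[
\bra{\phi}\sigma\ket{\phi} \;\le\; \frac{\bigl(\sqrt{A(y)}+\sqrt{A(\tilde y)}\bigr)^2}{2\bigl(A(y)+A(\tilde y)\bigr)} \;=\; \tfrac{1}{2} + \frac{\sqrt{A(y)A(\tilde y)}}{A(y)+A(\tilde y)}.
\]

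\textbf{Step 3 --- apply the hypothesis.} Write $q = A(\tilde y)/A(y) \le 1/p$. Since $q\mapsto \sqrt{q}/(1+q)$ is increasing on $(0,1)$, I conclude $\bra{\phi}\sigma\ket{\phi}\le \tfrac{1}{2} + \sqrt{p}/(1+p)$. Converting this fidelity bound through the SWAP test against $\ket{\phi}$ yields the claimed catch probability of at least $\tfrac{1}{2} - \sqrt{p}/(1+p)$.

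\textbf{Main obstacle.} The real work lies in Step 1: correctly identifying $\ket{\hat\Psi}$ with the conditioning denominator $\sqrt{A(y)+A(\tilde y)}$. This denominator is exactly what turns the imbalance of unnormalized weights into a genuine imbalance of Schmidt coefficients of $\ket{\hat\Psi}$ on the $vA$-cut, which then shows up as a fidelity defect against the balanced target $\ket{\phi}$. Beyond Step 1, the proof is arithmetic; moreover the triangle-inequality bound in Step 2 is already tight, attained when Alice chooses $\ket{\alpha_c} = \ket{c, T(c)}\otimes\ket{v_c}$ with the private vectors $\ket{v_c}$ aligned, so no further improvement of this estimate is possible.
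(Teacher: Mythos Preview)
Your proof is correct and follows essentially the same route as the paper's: both arguments bound $\bra{\phi}\sigma\ket{\phi}$ by $\tfrac{1}{2}+\frac{\sqrt{A(y)A(\tilde y)}}{A(y)+A(\tilde y)}$ and then use the hypothesis $A(y)\ge pA(\tilde y)$ together with the monotonicity of $\sqrt{q}/(1+q)$ on $(0,1)$. The only difference is cosmetic---you keep the purification on $H^p_A$ and apply the triangle inequality to $\ket{\gamma_y}+\ket{\gamma_{\tilde y}}$, whereas the paper traces out $H^p_A$ first and bounds the four relevant matrix entries of $\sigma$ directly (the off-diagonal ones via what is effectively Cauchy--Schwarz), arriving at the identical inequality.
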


The proof is found in Appendix \ref{app-alice-damage}, as it is somewhat
technical. The intuition is that the super-operator which
acts on the state diminishes the entanglement between $H^v_A$ and
$H^m_A$. Therefore, this is true for any assignment Alice will
send in the second round of the protocol.

If the condition of lemma \ref{alice-damage} holds, we say that
the measurement {\it $p$-damaged} the state. An analogous lemma
holds for Bob. The following lemma is trivial:
\begin{lem}
If there exists a set $D \subset Y \times \tilde Y \times X \times
\tilde X$ such that
\begin{enumerate}
\item For any $d = (y, \tilde y, x, \tilde x) \in D$ we have $x
\in y$, and either $A$ or $B$ $p$-damage $d$ for some constant
$p$.
\item  $\sum_{d \in D} \Pr(d | k) > \epsilon_{\sizeD}$ for some
constant $\epsilon_{\sizeD}$
\end{enumerate}

Then at least one of the provers gets caught in the SWAP test with
probability $\epsilon_{\sizeD} \left(\frac{1}{2} -
\frac{\sqrt{p}}{1+p}\right)$.
\end{lem}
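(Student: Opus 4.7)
The plan is to reduce the statement directly to Lemma \ref{alice-damage} (and its Bob-side analogue) by a straightforward averaging over the elements of $D$. Since the values $(y,\tilde y,x,\tilde x)$ arise from a projective measurement on $H^{\saux}_v$ after the provers have acted, the events indexed by distinct $d\in D$ are mutually exclusive once we condition on the measurement outcome $k$ of the provers' separable measurement. This turns the ``at least one prover caught'' probability into a clean sum of conditional probabilities.

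Concretely, I would proceed as follows. First, for each $d=(y,\tilde y,x,\tilde x)\in D$, use the hypothesis that either $A$ or $B$ $p$-damages $d$. If $A$ does, then $A(y)\ge p A(\tilde y)$ (or the symmetric inequality), so Lemma \ref{alice-damage} implies that, conditioned on the verifier having measured $(y,\tilde y)$ and on the provers' outcome $k$, Alice fails the SWAP test with probability at least $\tfrac{1}{2}-\tfrac{\sqrt{p}}{1+p}$; if $B$ damages $d$ instead, the analogous statement for Bob gives the same lower bound on Bob's SWAP failure probability. Either way, the probability that at least one prover is caught, conditioned on the joint event $(d,k)$, is at least $\tfrac{1}{2}-\tfrac{\sqrt{p}}{1+p}$.

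Second, I would combine these conditional bounds by the law of total probability. Writing
\begin{equation*}
\Pr(\text{some prover caught} \mid k) \;\ge\; \sum_{d\in D} \Pr(d\mid k)\,\Pr(\text{some prover caught} \mid d,k),
\end{equation*}
and substituting the per-$d$ bound above, one obtains
\begin{equation*}
\Pr(\text{some prover caught} \mid k) \;\ge\; \Bigl(\tfrac{1}{2}-\tfrac{\sqrt{p}}{1+p}\Bigr)\sum_{d\in D}\Pr(d\mid k) \;>\; \epsilon_{\sizeD}\Bigl(\tfrac{1}{2}-\tfrac{\sqrt{p}}{1+p}\Bigr),
\end{equation*}
which is the desired bound.

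There is no real obstacle here — this is precisely why the authors call the lemma trivial. The only mild subtlety worth flagging is that Lemma \ref{alice-damage} is stated in terms of the state $A_k$ produces, so one must remember that the damage conclusion already holds \emph{for any} assignment Alice sends in round two and, in particular, is independent of the rest of the tuple $d$; this is what lets us apply the bound uniformly across all $d\in D$ and simply pull the constant $\tfrac{1}{2}-\tfrac{\sqrt{p}}{1+p}$ out of the sum.
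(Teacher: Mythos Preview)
Your proposal is correct and is exactly the trivial averaging argument the paper has in mind: the authors state the lemma without proof, calling it ``trivial,'' and your reduction to Lemma~\ref{alice-damage} (and its Bob analogue) via the law of total probability over the disjoint events $d\in D$ is precisely that argument.
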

%

In this case we say that D is an $(\epsilon_{\sizeD}, p)$ {\it bad
set}.

\subsection{Large $NM\tilde W$}
\begin{thm} \label{large-NMW}
If $NM \tilde W \ge 100 M W_{A} W_{B}$ then at least one of the
provers fails the SWAP test with probability $\frac{1}{6.96 \cdot
10^9} = \min \left\{ \frac{1}{6.96 \cdot 10^9}, \frac{1}{4.2 \cdot
10^7} \right\}$.
\end{thm}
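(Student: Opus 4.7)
The plan is to exhibit a large $(\epsilon_D, p)$-bad set $D$ and then invoke the bad-set lemma. First I would use the hypothesis $N\tilde W \ge 100\, W_A W_B$ to simplify the denominator of~(\ref{prob-bound}): since $22\, M W_A W_B \le 0.22\, MN\tilde W$, the denominator is at most $2.22\, MN\tilde W$. Dropping all but the $A(y)B(x)$ summand in the numerator then yields
\[
\Pr(y, \tilde y, x, \tilde x \mid k) \;\ge\; \frac{A(y)\, B(x)}{2.22\, MN\tilde W}.
\]

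Next I would let $D$ consist of the tuples that are $p$-damaged (by Alice or by Bob) for some small constant $p$, so that the complementary ``flat'' tuples satisfy both $A(\tilde y) \in [A(y)/p,\, pA(y)]$ and $B(\tilde x) \in [B(x)/p,\, pB(x)]$. The key step is a one-line Markov bound: since $|\{\tilde y : A(\tilde y) \ge A(y)/p\}| \le p W_A / A(y)$, writing $g_A(y)$ for the fraction of flat $\tilde y$'s gives the pointwise estimate $A(y)\, g_A(y) \le p\bar A$ (trivially if $A(y) \le p\bar A$, by Markov otherwise), where $\bar A = W_A/M$; the same argument gives $B(x)\, g_B(x) \le p \bar B$ with $\bar B = W_B/N$. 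Multiplying these pointwise bounds and summing over incident pairs,
\[
\sum_{\text{flat tuples}} A(y)\, B(x) \;=\; MN\sum_{y,\,x\in y} \bigl[A(y)\, g_A(y)\bigr]\bigl[B(x)\, g_B(x)\bigr] \;\le\; MN\cdot 3M\cdot p^2 \bar A \bar B \;=\; 3 p^2 M W_A W_B \;\le\; \frac{3 p^2}{100}\, MN\tilde W,
\]
using the hypothesis again in the last step. Since the full sum $\sum_{y,\tilde y, x\in y,\tilde x} A(y) B(x) = MN\tilde W$, the damaged tuples account for at least $(1 - 3p^2/100)\, MN\tilde W$ of the $A(y)B(x)$-mass.

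Putting it all together with, say, $p = 3$, I obtain $\sum_{d \in D} \Pr(d \mid k) \ge 0.73/2.22 > 0.3$, and the bad-set lemma yields SWAP-test failure probability at least $0.3 \cdot (1/2 - \sqrt{3}/4) > 10^{-2}$, comfortably above the claimed $1/(6.96 \cdot 10^9)$. The main conceptual obstacle is isolating the right pointwise quantity: the Markov estimate $A(y)\, g_A(y) \le p\bar A$ is what converts the global hypothesis $N\tilde W \gg W_A W_B$ (which encodes joint concentration of $A$ and $B$ on clause--variable incidences) into a pointwise bound that multiplies cleanly across the two provers. The remaining work is bookkeeping, plus a negligible correction for the boundary tuples with $y = \tilde y$ or $x = \tilde x$ where the numerator in~(\ref{exact-prob-of-tuple}) takes a different form; these contribute an $O(1/\min(M,N))$ term that is absorbed by the large slack in the constants.
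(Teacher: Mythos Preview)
Your proposal is correct and takes a genuinely different route from the paper. The paper argues by a two-stage dyadic bucketing followed by a contradiction: first it buckets clauses by $u(c)=\sum_{v\in c}A(c)B(v)$ (Lemma~\ref{steps}), then, on the two heavy $u$-buckets, by $A(c)$ (Lemma~\ref{T-steps}); if neither bucketing exhibits spread, almost all clauses lie in two consecutive $A$-buckets inside two consecutive $u$-buckets, and a direct estimate forces $N\tilde W< 100\,W_AW_B$, contradicting the hypothesis. You instead bound the complement of $D$ directly: the pointwise Markov bound $A(y)\,g_A(y)\le p\bar A$ (and its $B$-analogue) multiply to give $\sum_{\text{flat}}A(y)B(x)\le 3p^2MW_AW_B\le\frac{3p^2}{100}MN\tilde W$, so the damaged tuples carry almost all of the $A(y)B(x)$-mass. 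This is shorter and yields a failure probability on the order of $10^{-2}$ rather than $10^{-10}$. The paper's bucketing machinery is not wasted, since it is reused in the small-$NM\tilde W$ regime to locate the near-uniform sets $F$ and $G$; but for the present theorem your argument is the cleaner one.

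One small point to make explicit when you write it up: your bad set $D$ contains tuples with $A(\tilde y)\ge pA(y)$ as well as $A(y)\ge pA(\tilde y)$, whereas Lemma~\ref{alice-damage} is stated only in the asymmetric form. The proof of that lemma is manifestly symmetric in $y,\tilde y$ (the governing quantity is $\sqrt{A(y)A(\tilde y)}/(A(y)+A(\tilde y))$), so the extension is immediate, but you should say so. Your remark about the boundary tuples $y=\tilde y$ or $x=\tilde x$ is harmless but unnecessary: such tuples are automatically flat, so they are already absorbed in your upper bound on the flat mass and never enter $D$.
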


The proof is by contradiction. We prove Lemma \ref{steps}, which
states that if $A$ and $B$ do not have a certain property then the
provers have a constant probability of getting caught. We then
prove that if $A$ and $B$ do have that property than either a
second property holds or the provers get caught, with some
probability. The second property implies $NM \tilde W < 100 M
W_{A} W_{B}$, which is a contradiction. Remember $\tilde W =
\sum_{c \in C. v \in c} A(c)B(v)$. For $c \in C$, let $u(c) =
\Sigma_{v \in c} A(c)B(v)$, and for $S \subset C$, $U(S) = \sum_{c
\in S} u(c)$. Let
\[
S_{i} = \left\{c : \frac{\tilde W}{2^{i+1}} < u(c) \le
\frac{\tilde W}{2^{i}} \right\}
\]

\begin{lem} \label{steps}
If there exists an index $j$ such that $\sum_{i=0}^{j-1} U(S_i)
>\tilde W / 100$ and $\sum_{i=j+1}^{\infty} U(S_i) > \tilde W /
100$, then the provers get caught with constant probability
$\frac{1}{6.96 \cdot 10^9}$, generated from a $(\frac{1}{4.8 \cdot
10^{7}}, \sqrt{2})$ bad set.
\end{lem}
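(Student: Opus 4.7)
My plan is to produce an $(\epsilon_{\sizeD}, \sqrt{2})$ bad set $D$ from the dyadic-gap hypothesis and then invoke Lemma \ref{alice-damage} (and its $B$-analogue). Set $S_{\mathrm{low}} := \bigcup_{i < j} S_i$ and $S_{\mathrm{high}} := \bigcup_{i > j} S_i$; by hypothesis each has $U$-weight exceeding $\tilde W/100$. The key arithmetic observation is that the bucket definitions force $u(y)/u(\tilde y) \ge 2$ for every $y \in S_{\mathrm{low}}$ and $\tilde y \in S_{\mathrm{high}}$. Writing $u(c) = A(c)\,\beta(c)$ with $\beta(c) := \sum_{v \in c} B(v)$, this factor-$2$ gap must be witnessed on at least one of the two sides: for each such pair, either $A(y)/A(\tilde y) \ge \sqrt{2}$ or $\beta(y)/\beta(\tilde y) \ge \sqrt{2}$.

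To turn this observation into a single uniform bad set I would case-split by a two-level pigeonhole. Subdivide $S_{\mathrm{low}}$ and $S_{\mathrm{high}}$ by dyadic buckets on $A(c)$ and isolate dominant sub-buckets $S'_{\mathrm{low}}, S'_{\mathrm{high}}$ carrying a constant fraction of the respective $U$-weights. If their representative $A$-values differ by a factor of at least $\sqrt{2}$, then every quadruple $(y, \tilde y, x, \tilde x)$ with $y \in S'_{\mathrm{low}}$, $\tilde y \in S'_{\mathrm{high}}$, $x \in y$, and $\tilde x \in V$ satisfies Lemma \ref{alice-damage} with $p = \sqrt{2}$, giving $A$-damage everywhere on the product. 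Otherwise the $\sqrt{2}$ gap is carried within matched $A$-buckets by $\beta(c)$, and a further dyadic pigeonhole on $B(v)$---using the regularity that each variable appears in exactly five clauses and each clause contains three variables---extracts variable sets $X_H \subset \bigcup_{c \in S'_{\mathrm{low}}} c$ and $X_L \subset V$ with $B(x) \ge \sqrt{2}\,B(\tilde x)$ whenever $x \in X_H$, $\tilde x \in X_L$; the resulting quadruples satisfy $x \in y$ automatically and are $B$-damaged.

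With $D$ in hand, I would lower-bound $\sum_{d \in D} \Pr(d \mid k)$ via Equation \ref{prob-bound}. Under the standing hypothesis $NM\tilde W \ge 100\, M W_A W_B$ of Theorem \ref{large-NMW}, the denominator satisfies $2MN\tilde W + 22 M W_A W_B \le 2.22\, NM\tilde W$. The numerator, summed over $D$, factors into products of partial $A$- and $B$-sums over the restricted buckets---each a constant fraction of $W_A, W_B$ by construction---and substituting $\tilde W = \sum_{c \in C, v \in c} A(c)B(v)$ in the denominator produces the announced $\epsilon_{\sizeD} \ge 1/(4.8 \cdot 10^7)$. Combining with Lemma \ref{alice-damage} at $p = \sqrt{2}$, whose catch probability is $\tfrac{1}{2} - \tfrac{2^{1/4}}{1 + \sqrt{2}} \approx 0.0074$, yields the overall bound $1/(6.96 \cdot 10^9)$.

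The main obstacle will be the bookkeeping in the two-level dyadic pigeonhole: the dominant $A$- and $B$-sub-buckets must jointly retain an absolute constant fraction of the total weights, with no $\log M$ or $\log N$ loss from the number of dyadic levels. I would address this by truncating clauses and variables whose $A$- or $B$-values are so extreme as to contribute negligibly to $\tilde W$, leaving only a bounded number of meaningful dyadic levels to pigeonhole against. A second subtlety is the asymmetric role of the constraint $x \in y$: in the $B$-damage branch, the variables $x$ must be drawn from the variables of heavy clauses in $S'_{\mathrm{low}}$ (so that the $x \in y$ constraint is automatic) rather than from the light side.
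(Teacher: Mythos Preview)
Your starting decomposition and the factor-$2$ observation on $u(\cdot)$ match the paper's, but the two-level dyadic pigeonhole has two real problems, and the paper avoids both with a trick you are missing.

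First, the $\log M$ loss you flag is not repaired by truncation: nothing in the hypothesis prevents the values $A(c)$ (restricted to $S_{\mathrm{low}}$, say) from spreading their weight roughly evenly across $\Theta(\log M)$ dyadic levels, so there need not be any ``dominant'' $A$-sub-bucket carrying a constant fraction of the $U$-weight. Second, and more seriously, in your $B$-branch the $\sqrt{2}$ gap you can actually extract is between the specific variables $x = \vmax(y)$ and $\tilde x = \vmin(\tilde y)$, so $\tilde x$ is pinned to $\tilde y$. Your bad set then has only one tuple per pair $(y,\tilde y)$, with no free $\tilde x$ ranging over $V$; summing the numerator of (\ref{prob-bound}) over such a set falls short of the denominator $\approx NM\tilde W$ by a factor of order $N$, so $\Pr(D\mid k)$ is $O(1/N)$, not constant.

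The paper sidesteps both issues with a single device that eliminates the second pigeonhole and the case split entirely. Split $S_{\downs}$ arbitrarily into halves $S_l,S_r$ of size at least $M/200$ each. For every triple $(c_{\ups},c_l,c_r)\in S_{\ups}\times S_l\times S_r$, the $\vmax/\vmin$ trick gives $A(c_{\ups})B(\vmax(c_{\ups}))>2A(c_l)B(\vmin(c_l))$ and the same with $c_r$. Now order the pair so that $A(c_l)\le A(c_r)$; then from the $c_r$ inequality one gets
\[
A(c_{\ups})\,B(\vmax(c_{\ups}))\;>\;2A(c_r)B(\vmin(c_r))\;\ge\;2A(c_l)B(\vmin(c_r)).
\]
Hence the tuple $(y,\tilde y,x,\tilde x)=(c_{\ups},c_l,\vmax(c_{\ups}),\vmin(c_r))$ satisfies $A(y)B(x)>2A(\tilde y)B(\tilde x)$ and is therefore $\sqrt{2}$-damaged by \emph{some} prover---you never need to know which. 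Crucially, the third clause $c_r$ decouples $\tilde x$ from $\tilde y$: as $(c_l,c_r)$ ranges over $S_l\times S_r$ you get $\ge (M/200)^2$ tuples per $c_{\ups}$, and since $M\ge N$ this recovers the mass your $B$-branch was missing, with no further bucketing and no logarithmic loss.
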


The proof is found in Appendix \ref{app-large-nmw}. It follows by
constructing a bad set, such that the clauses (and variables) in
$\cup_{i=0}^{j-1} S_i$ stand for $y,x$, and the clauses (and
variables) in $\cup_{i=j+1}^{\infty} S_i$ stand for $\tilde y,
\tilde x$, where we use the fact that if $u(c_1) > 2 u(c_2)$ for
some two clauses, then either Alice damages the state because
$A(c_1)> \sqrt{2} A(c_2)$, or Bob $\sqrt{2}$ damages the state, or
both of them do. We note that if $u(c_1) > 2 u(c_2), u(c_1) > 2
u(c_3)$ it may still be the case that $A(c_2) > A(c_1)$, and for
each $v_3 \in c_3$ and each $v_1 \in c_1$ $B(v_3) > B(v_1)$.
However, taking out such tuples only diminishes the size of the
bad set $D$ by a factor of 36 (a factor of 9 comes from choosing
one of the variables in the clause, and a factor of 4 comes from
choosing the clause).

If the condition of Lemma \ref{steps} does not hold, then there
must be an index $j$ such that $U(S_j) + U(S_{j+1}) > 0.98 \tilde
W$. Define $F = S_j \cup S_{j+1}$. Remembering that $W_A = \sum_{c \in
C} A(c)$, we partition the clauses in $F$:
\[
T_{i} = \{c \in F: \frac{W_{A}}{2^{i+1}} < A(c) \le
\frac{W_A}{2^{i}} \}
\]
\begin{lem} \label{T-steps}
If there exists an index $j$ such that $\sum_{i=0}^{j-1} |T_i| >
|F|/ 100$, and $\sum_{i=j+1}^{\infty} |T_i| > |F| / 100$, then the
first prover gets caught with constant probability $\frac{1}{4.2
\cdot 10^7}$, generated from a  $(\frac{1}{1.2 \cdot 10^{6}},2)$
bad set.
\end{lem}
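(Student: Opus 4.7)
The plan is to parallel the proof of Lemma~\ref{steps}, using the fact that here the hypothesis supplies a direct gap between $A$-values rather than between products $u(c)$. I would let $L = \bigcup_{i=0}^{j-1} T_i$ and $H = \bigcup_{i=j+1}^{\infty} T_i$, so by hypothesis $|L|, |H| > |F|/100$, and by the definition of the $T_i$ every $y \in L$ has $A(y) > W_A/2^j$ while every $\tilde y \in H$ has $A(\tilde y) \le W_A/2^{j+1}$, giving $A(y) > 2 A(\tilde y)$. Take as the candidate bad set
\[
D = \{(y, \tilde y, x, \tilde x) : y \in L,\ \tilde y \in H,\ x \in y,\ \tilde x \in V\},
\]
so that Lemma~\ref{alice-damage} with $p = 2$ certifies, for every $d \in D$ and every assignment Alice may reply with, a catch probability of at least $\tfrac{1}{2} - \tfrac{\sqrt{2}}{3}$ in the SWAP test.

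It remains to show $\Pr(D \mid k) \ge 1/(1.2\cdot 10^6)$, after which multiplying by $\tfrac{1}{2} - \tfrac{\sqrt{2}}{3} \approx 0.029$ yields the claimed $1/(4.2\cdot 10^7)$. I would apply the lower bound~(\ref{prob-bound}); factoring the numerator as $(A(y)+A(\tilde y))(B(x)+B(\tilde x))$ and summing over $D$ produces
\[
|H|\, N\, U(L) \;+\; 3|H|\, W_B\, A(L) \;+\; N\, A(H)\, \beta(L) \;+\; 3|L|\, W_B\, A(H),
\]
where $\beta(c) = \sum_{v \in c} B(v)$ so that $u(c) = A(c)\beta(c)$, and I write $A(S) := \sum_{c \in S} A(c)$. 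The denominator of~(\ref{prob-bound}) is at most $2.22\, MN\tilde W$ in the regime of Theorem~\ref{large-NMW} (where $NM\tilde W \ge 100\, MW_AW_B$). The structure inherited from Lemma~\ref{steps} --- namely $F = S_J \cup S_{J+1}$ with $U(F) > 0.98\,\tilde W$ and every $u$-value in $F$ confined to the interval $(\tilde W/2^{J+2},\ \tilde W/2^J]$ --- gives $|F| \ge 0.98 \cdot 2^J$ and, combined with $|L|, |H| > |F|/100$, the estimates $U(L),\ U(H) > \tilde W / 410$.

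The main obstacle is the final bookkeeping. The cleanest single term, $|H| N U(L) / (2.22\, MN\tilde W)$, carries an unwanted factor $|F|/M$ which can be small when $F$ is concentrated on few clauses. To absorb this I would combine all four numerator contributions --- in particular the $3|L|W_B A(H)$ and $3|H| W_B A(L)$ terms, which carry $W_B$ rather than $U(L), U(H)$ and thus do not lose the $|F|/M$ factor --- or alternatively restrict $D$ to a pruned subset on which both the $A$-marginal (automatic from the gap) and the $B$-marginal behave well, losing only a constant factor in the spirit of the ``factor of $36$'' pruning remarked after Lemma~\ref{steps}. Tracking the constants through one of these routes down to $\Pr(D \mid k) \ge 1/(1.2\cdot 10^6)$ is the routine but delicate part of the argument.
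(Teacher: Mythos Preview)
Your setup matches the paper's: your $L$ and $H$ are its $T_{\ups}$ and $T_{\downs}$, and every pair $(y,\tilde y)\in L\times H$ is $2$-damaged by $A$. The paper's bad set is slightly smaller than yours --- it keeps only $x=\vmax(y)$ rather than all $x\in y$ --- but this is inessential.

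The substantive difference is in the probability estimate. The paper retains just the single term $A(y)B(x)$ from the numerator of~(\ref{prob-bound}), uses $A(y)B(\vmax(y))\ge u(y)/3$ so that $\sum_{y\in T_{\ups}}A(y)B(\vmax(y))\ge U(T_{\ups})/3$, and bounds $U(T_{\ups})\ge \tilde W/500$ exactly as you do (from $|T_{\ups}|>|F|/100$ together with the fact that all $u$-values in $F$ lie within a factor~$4$). For the remaining factor $|T_{\downs}|$ the paper simply invokes $|F|\ge 0.98M$, which it takes from the surrounding text as a consequence of the failure of the hypothesis of Lemma~\ref{steps}; this gives $|T_{\downs}|>0.98M/100>M/200$. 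With the denominator bounded by $4MN\tilde W$, the single term already yields
\[
\Pr(D\mid k)\;\ge\;\frac{|T_{\downs}|\cdot N\cdot U(T_{\ups})/3}{4MN\tilde W}\;\ge\;\frac{(M/200)\cdot N\cdot(\tilde W/1500)}{4MN\tilde W}\;=\;\frac{1}{1.2\cdot 10^{6}}.
\]
So your ``main obstacle'' --- the factor $|F|/M$ --- never appears in the paper's argument, because the paper treats $|F|\ge 0.98M$ as available from context. Your four-term expansion and the proposed pruning are therefore unnecessary detours; moreover, the terms you single out as avoiding the loss, such as $3|H|W_B A(L)$, still carry a factor $|H|>|F|/100$ and give no control on $A(L)$ beyond what the $T_i$-structure provides, so they would not obviously rescue the bound without $|F|\approx M$ either. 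If your worry is really that $|F|\ge 0.98M$ is not fully justified by $U(F)>0.98\tilde W$ alone, that is a fair observation about the ambient argument, but it is not something the proof of Lemma~\ref{T-steps} itself is meant to supply.
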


\noindent The proof appears in Appendix \ref{app-large-nmw}. It is
very similar to the one of Lemma \ref{steps}, but much simpler.

As before, if the condition of Lemmas \ref{steps} and
\ref{T-steps} do not hold, then
\[ \exists i : |T_i| + |T_{i+1}| > 0.98 |F| \ge 0.98^2 M
> 0.96M \]
Let $G = T_i \cup T_{i+1}$. As $G \subset F$, and as $\forall c_1,
c_2 \in F : u(c_1) < 4 u(c_2)$ we have
\begin{equation}\label{lemma1}
U(G)> 0.25 \cdot 0.98 U(F) > 0.25 \cdot 0.98^2 \tilde W
\end{equation}
Note $\sum_{c \in G} \sum_{v \in c} B(v) \le 5 W_{B}$, as each
variable appears $5$ times. Also, since $\forall c \in G : A(c)
> W_A / 2^{i+1}$
\begin{equation}\label{lemma2}
0.96 M \frac{W_A}{2^{i+1}} < \frac{W_A}{2^{i+1}} |G| < \sum_{c \in
G} A(c) < W_{A}
\end{equation}
Putting this together, we get
\begin{eqnarray*}
& & 0.25 \cdot 0.98 ^2 \tilde W \stackrel{(\ref{lemma1})}{<} U(G)
= \sum_{c \in G} u(c) = \sum_{c \in G}\sum_{v \in c} A[c]B[v] \le
\sum_{c \in G}\sum_{v \in c} \frac{W_{A}}{2^{i-1}} B[v] = \\
& & \frac{4W_{A}}{2^{i+1}} \sum_{c \in G} \sum_{v \in c} B[v] \le
\frac{20 W_A W_B}{2^{i+1}} \stackrel{(\ref{lemma2})}{\le} \frac{20
W_A W_B}{0.96M}\le \frac{20 W_A W_B}{0.96N}
\end{eqnarray*}
This is a contradiction to $MN \tilde W \ge 100 M W_{A} W_{B}$.
This proves Theorem \ref{large-NMW}.

\subsection{Small $NM\tilde W$}
In this subsection we handle those values of $k$ for which the
premise of Theorem \ref{large-NMW} does not hold, i.e., $NM \tilde
W < 100 M W_{A} W_{B}$. Define $S_i = \{ c \in C :
\frac{W_A}{2^{i+1}} \le A(c) < \frac{W_{A}}{2^i} \}$. For a set $S
\subset C$, let $W(S) = \Sigma_{c \in S} A(c)$.

\begin{lem} \label{small-nmw-clause-steps}
If $NM\tilde W < 100 M W_A W_B$ and there exists an index $i$ such
that
\begin{equation}
\sum_{j = 0}^{i-1} W(S_j) > \gamma 10^{-4} W_A \bigwedge \sum_{j =
i + 1}^{\infty} |S_j| > \gamma 10^{-4} M
\end{equation}
then Alice is caught cheating with probability
$\frac{\gamma^2}{2.6 \cdot 10^{12}}$, generated from a
$(\frac{\gamma^2}{7.4 \cdot 10^{10}}, 2)$ bad set.
\end{lem}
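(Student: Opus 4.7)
The plan is to mirror the strategy of Lemmas \ref{steps} and \ref{T-steps}: the hypothesis already builds in a large gap between ``heavy'' and ``light'' clauses with respect to $A$, so I use this gap directly to obtain a $2$-damage bad set, and then bound its probability mass using Equation (\ref{prob-bound}) in the regime where the denominator is controlled by $M W_A W_B$.

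Concretely, let $i$ be the index supplied by the hypothesis and set $Y=\bigcup_{j=0}^{i-1}S_j$ and $\tilde Y=\bigcup_{j=i+1}^{\infty}S_j$. Every $y\in Y$ has $A(y)\ge W_A/2^i$ while every $\tilde y\in\tilde Y$ has $A(\tilde y)<W_A/2^{i+1}$, so $A(y)>2A(\tilde y)$. By Lemma \ref{alice-damage} the measurement $2$-damages every tuple of the form $(y,\tilde y,x,\tilde x)$ with $y\in Y$, $\tilde y\in\tilde Y$, irrespective of the two variables; note also that $Y$ and $\tilde Y$ are disjoint, so the ``$y=\tilde y$'' caveat in the numerator of Equation (\ref{exact-prob-of-tuple}) never triggers. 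I take the bad set to be
\[
D=\{(y,\tilde y,x,\tilde x):y\in Y,\ \tilde y\in\tilde Y,\ x\in y,\ \tilde x\in V\},
\]
which automatically satisfies the ``$x\in y$'' condition of the trivial lemma following Lemma \ref{alice-damage}.

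To lower bound $\sum_{d\in D}\Pr(d|k)$ I apply Equation (\ref{prob-bound}). The small-$\tilde W$ hypothesis $NM\tilde W<100\,M W_A W_B$ gives $2MN\tilde W+22MW_AW_B\le 222\,MW_AW_B$. In the numerator I keep only the single term $A(y)B(\tilde x)$; summing over $D$ contributes a factor $3$ from $x\in y$ (three variables per clause), a factor $|\tilde Y|$ from $\tilde y$, a factor $W(Y)=\sum_{y\in Y}A(y)$, and a factor $W_B$ from $\tilde x$. Thus
\[
\sum_{d\in D}\Pr(d|k)\ \ge\ \frac{3\,W(Y)\,|\tilde Y|\,W_B}{222\,MW_AW_B}\ =\ \frac{3\,W(Y)\,|\tilde Y|}{222\,MW_A},
\]
and plugging in $W(Y)>\gamma 10^{-4}W_A$ and $|\tilde Y|>\gamma 10^{-4}M$ from the hypothesis yields $\sum_{d\in D}\Pr(d|k)\ge \gamma^2/(7.4\cdot 10^{10})$ with room to spare (the slack absorbs any loss from restricting to $\tilde x\ne x$ or accounting for exact versus approximate denominator bounds).

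Finally, the trivial lemma applied with $p=2$ and $\epsilon_{\sizeD}=\gamma^2/(7.4\cdot 10^{10})$ delivers a failure probability of at least $\epsilon_{\sizeD}\bigl(\tfrac12-\tfrac{\sqrt2}{3}\bigr)\ge \gamma^2/(2.6\cdot 10^{12})$, as claimed. There is no real obstacle here; this case is genuinely simpler than Lemmas \ref{steps} and \ref{T-steps} because the hypothesis has already done the partitioning work, and the only thing to watch is that the denominator bound (where $N\tilde W$ and $W_AW_B$ compete) is supplied by the ``small $NM\tilde W$'' assumption itself.
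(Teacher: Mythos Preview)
Your proof is correct and follows essentially the same approach as the paper: define the bad set $D=S_{\ups}\times S_{\downs}\times\{x\in y\}\times V$, observe that every tuple is $2$-damaged by Alice, bound the denominator of (\ref{prob-bound}) by $222\,MW_AW_B$ via the small-$\tilde W$ hypothesis, retain only the $A(y)B(\tilde x)$ term in the numerator, and sum. Your exposition is slightly more careful than the paper's in flagging the $y\neq\tilde y$ issue and the ``$x\in y$'' condition, but the argument is the same.
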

\noindent The proof is found in Appendix \ref{app-small-nmw}. It
is very similar to that of Lemma \ref{T-steps}.

\begin{lem} \label{small-nmw-lem2}
If $NM\tilde W < 100 M W_A W_B$ and the second condition of Lemma
\ref{small-nmw-clause-steps} does not hold, then there exists an
index $i$ such that for $F = S_{i} \cup S_{i+1}$ we have
\[ |F| \ge (1 - 0.0002 \gamma) M \bigwedge W(F) \ge (1 - 0.0002 \gamma) W_A \bigwedge \forall c \in F
: A(c) \ge \frac{W_{A}}{5M}
\]
\end{lem}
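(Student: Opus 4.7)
The plan is to locate the ``transition level'' where the weight in the low levels stops being negligible and then to take $F$ to be the two adjacent levels there. Unpacking the hypothesis, the negation of the second premise of Lemma~\ref{small-nmw-clause-steps} says that for every index $i$, at least one of $\sum_{j=0}^{i-1} W(S_j) \le \gamma 10^{-4} W_A$ or $\sum_{j=i+1}^{\infty} |S_j| \le \gamma 10^{-4} M$ must hold. I would set
\[
i^{*} := \max\bigl\{i : \sum\nolimits_{j=0}^{i-1} W(S_j) \le \gamma 10^{-4} W_A\bigr\},
\]
which is well defined because the partial sum is monotone in $i$ and eventually reaches $W_A$. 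At $i = i^{*}+1$ the first inequality fails, so by hypothesis the second holds there: $\sum_{j \ge i^{*}+2} |S_j| \le \gamma 10^{-4} M$. Hence all but at most $\gamma 10^{-4} W_A$ of the weight lies in levels $\ge i^{*}$ and all but at most $\gamma 10^{-4} M$ of the clauses lie in levels $\le i^{*}+1$, so $F := S_{i^{*}} \cup S_{i^{*}+1}$ is the natural candidate.

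Before verifying the three conclusions I would pin down $i^{*}$. Since the weight in levels $\ge i^{*}$ is at least $(1-\gamma 10^{-4})W_A$ while each such clause has $A(c) < W_A/2^{i^{*}}$ and there are at most $M$ clauses altogether, we obtain the upper bound $2^{i^{*}} \le M/(1 - \gamma 10^{-4})$; for $\gamma$ small this gives $2^{i^{*}+2} < 5M$, which is exactly what is needed to conclude $A(c) \ge W_A/(5M)$ for every $c \in F$. For the complementary lower bound, the inequality $A(c) \ge W_A/2^{j+1}$ inside $S_j$ together with $W(S_j) \le W_A$ forces $|S_j| \le 2^{j+1}$, so the count constraint $\sum_{j \le i^{*}+1} |S_j| \ge (1 - \gamma 10^{-4}) M$ yields $2^{i^{*}+2} \ge (1 - \gamma 10^{-4}) M$, placing $i^{*}$ in the narrow window $\log_2 M - 2 \le i^{*} \le \log_2(5M) - 2$.

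With $i^{*}$ so located, the remaining checks are straightforward. The weight outside $F$ is at most $\gamma 10^{-4} W_A$ on the low side by construction and at most $\gamma 10^{-4} M \cdot W_A/2^{i^{*}+2}$ on the high side, which via the lower bound on $2^{i^{*}+2}$ is essentially $\gamma 10^{-4} W_A$, so $W(F) \ge (1 - 0.0002\gamma)W_A$. The count outside $F$ is at most $\gamma 10^{-4} M$ on the high side directly and, on the low side, at most $\sum_{j \le i^{*}-1} W(S_j)\cdot 2^{j+1}/W_A \le \gamma 10^{-4}\cdot 2^{i^{*}}$, which by the upper bound on $2^{i^{*}}$ is essentially $\gamma 10^{-4} M$, giving $|F| \ge (1 - 0.0002\gamma)M$. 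The main obstacle is the tension among the three conclusions: condition (iii) forces $i^{*} \le \log_2(5M)-2$ while the weight estimate in (ii) requires $i^{*} \ge \log_2 M - 2$, so the proof really hinges on extracting matching upper and lower bounds on $2^{i^{*}}$ from the hypothesis. The small-$NM\tilde W$ assumption itself does not appear to enter the argument; it only fixes the contextual regime of this subsection.
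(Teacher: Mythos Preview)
Your approach is essentially the paper's: both select the transition index (the paper's $t-1$ coincides with your $i^{*}$) as the last level where the cumulative low-side weight is still $\le \gamma 10^{-4} W_A$, take $F$ to be the two adjacent levels there, and bound the count and weight outside $F$ separately on each side; you are also right that the hypothesis $NM\tilde W < 100 M W_A W_B$ is not actually used. One small slip to fix: the geometric sum $\sum_{j=0}^{i^{*}+1} 2^{j+1}$ equals $2^{i^{*}+3}-2$, not $2^{i^{*}+2}$, so your stated lower bound on $2^{i^{*}+2}$ is off by a factor of two---the cleaner route (which the paper uses implicitly) is to observe directly that the $\ge (1-\gamma 10^{-4})M$ clauses in levels $\le i^{*}+1$ each have $A(c) \ge W_A/2^{i^{*}+2}$ and total weight at most $W_A$, yielding $2^{i^{*}+2} \ge (1-\gamma 10^{-4})M$ and making the constant $0.0002\gamma$ come out as stated.
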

\noindent Again the proof is found in Appendix
\ref{app-small-nmw}. Using $B$ instead of $A$, we define the sets
$T_i$ analogously to $S_i$: $T_i = \left\{ v \in V :
\frac{W_B}{2^{i+1}} \le B(v) < \frac{W_{B}}{2^i} \right\} $

\begin{lem} \label{bob-caught}
Either Bob gets caught cheating with probability
$\frac{\gamma^2}{3.9 \cdot 10^{12}}$ which is generated from a
$\frac{\gamma^2}{1.1 \cdot 10^{11}},2$ bad set, or else there
exists an index $i$ such that for $G = T_{i} \cup T_{i + 1}$ we have
$|G| > (1 - 0.0002 \gamma) N$,
$\Sigma_{v \in G} B(v) \ge (1 - 0.0002 \gamma) W_{B}$
and for each $v \in G$, $B(v) \ge \frac{W_{B}}{5N}$.
\end{lem}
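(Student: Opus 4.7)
The statement is the variable-side mirror of the combined content of Lemmas \ref{small-nmw-clause-steps} and \ref{small-nmw-lem2}. The plan is to repeat that two-step argument with $B$, $W_B$, $T_i$, and $N$ in place of $A$, $W_A$, $S_i$, and $M$, using the good clause set $F$ provided by Lemma \ref{small-nmw-lem2} to control the Alice factor in Equation (\ref{prob-bound}).

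\emph{Step 1: Bob-analogue of Lemma \ref{small-nmw-clause-steps}.} I would first prove that if there is an index $i$ with $\sum_{j<i} \Sigma_{v \in T_j} B(v) > \gamma 10^{-4} W_B$ and $\sum_{j>i}|T_j| > \gamma 10^{-4} N$, then Bob is caught, generating a $(\gamma^2/(1.1\cdot 10^{11}),2)$ bad set. Concretely, build $D$ by choosing $x \in L := \cup_{j<i} T_j$ (large $B$), $\tilde x \in R := \cup_{j>i} T_j$ (small $B$), and $y,\tilde y \in F$ with $x \in y$. By construction $B(x) \ge 2 B(\tilde x)$, so the Bob-analogue of Lemma \ref{alice-damage} gives $\tfrac12 - \tfrac{\sqrt 2}{3}$-damage on Bob for every element of $D$. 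To lower-bound $\Pr(D \mid k)$, plug Equation (\ref{prob-bound}) in, using $A(y), A(\tilde y) \ge W_A/(5M)$ from Lemma \ref{small-nmw-lem2} for the numerator, and $NM\tilde W < 100 MW_A W_B$ (so the denominator is $O(MW_A W_B)$) for the denominator. Summing over the allowed $(y,\tilde y,x,\tilde x)$, the $W_A$ and $W_B$ factors cancel, and the two $\gamma 10^{-4}$ slacks combine to give a $\gamma^2$ factor as claimed.

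\emph{Step 2: concentration if Step 1 fails.} If no such $i$ exists, then for every index at least one of the two sums is at most $\gamma 10^{-4}$ of its total. A pigeonhole identical to the ``$F = S_j \cup S_{j+1}$'' step in the proof of Lemma \ref{small-nmw-lem2} produces an index $i$ with $|T_i|+|T_{i+1}| \ge (1-0.0002\gamma) N$ and $\Sigma_{v \in T_i \cup T_{i+1}} B(v) \ge (1-0.0002\gamma) W_B$. Set $G = T_i \cup T_{i+1}$; this gives the first two conclusions.

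\emph{Step 3: pointwise lower bound on $B(v)$.} Every $v \in G$ satisfies $B(v) \in [W_B/2^{i+2}, W_B/2^i)$, so any two $B$-values in $G$ differ by at most a factor of $4$. On the other hand, $\max_{v \in G} B(v) \ge (\Sigma_{v \in G} B(v))/|G| \ge (1-0.0002\gamma)W_B/N$. Combining, $\min_{v \in G} B(v) \ge \max_{v \in G} B(v)/4 \ge (1-0.0002\gamma) W_B/(4N) \ge W_B/(5N)$ for any reasonably small $\gamma$, exactly mirroring the ``$A(c) \ge W_A/(5M)$'' clause of Lemma \ref{small-nmw-lem2}.

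\emph{Main obstacle.} The only non-routine point is Step 1, where one must simultaneously impose $y, \tilde y \in F$ and $x \in y$ while keeping track of constants. Since each variable lies in exactly $5$ clauses and $|F| \ge (1-0.0002\gamma)M$, the restriction $y \in F$ with $x \in y$ throws away only a constant fraction of the $(y,x)$ pairs for most $x$; the constant $1.1\cdot 10^{11}$ in the target $\epsilon_D$ is designed to absorb both this loss and the factor-$36$ tuple-survival loss noted after Lemma \ref{steps}. Verifying that these losses collectively fit inside the stated bound is the bookkeeping that needs care; all other ingredients are verbatim adaptations of the Alice-side proofs already laid out in the appendix.
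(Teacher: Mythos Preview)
Your proposal is correct and Steps 2 and 3 match the paper's (implicit) argument verbatim. Step 1, however, is more complicated than the paper's route. You restrict both $y$ and $\tilde y$ to the good clause set $F$ from Lemma \ref{small-nmw-lem2} and then invoke the pointwise bound $A(y)\ge W_A/(5M)$; this is what creates your ``main obstacle'' of reconciling $y\in F$ with $x\in y$. The paper avoids $F$ entirely: it takes
\[
D \;=\; \bigcup_{x \in T_{\ups}}\;\bigcup_{y:\,x\in y}\;\{y\}\times C \times \{x\}\times T_{\downs},
\]
lets $\tilde y$ range over \emph{all} of $C$, and in the numerator of Equation (\ref{prob-bound}) keeps only the cross term $A(\tilde y)B(x)$. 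Summing $A(\tilde y)$ over $\tilde y\in C$ yields $W_A$ outright, and the sum over $y$ contributes a flat factor of $5$ (each variable appears in exactly five clauses). No appeal to $F$ or to a pointwise $A$-bound is needed, and the bookkeeping you flag disappears. Your approach still works and has the mild advantage of being a more literal transcription of Lemma \ref{small-nmw-clause-steps}; the paper's trick buys a shorter, cleaner constant chase.
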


\noindent The proof is very similar to the argument for Alice. It
is found in Appendix \ref{app-small-nmw}

Let $H = \{c \in F : \forall v \in c, v \in G\}$. As $|G| \ge (1 -
0.0002 \gamma)N$, and each variable appears 5 times we have $|H|
\ge (1 - 0.0002 \gamma)M - 5 \cdot 0.0002 \gamma N \ge (1 - 0.002
\gamma) M$. We now prove that a good success probability for Alice
and Bob implies a good success probability for the provers in the
classical game, with no communication. As the classical success
probability is bounded, this will give a bound for the quantum
success probability. Before we begin, we go over the classical
setting.

\subsubsection{Classical Setting} \label{define-gamma}
Let Charlie and Diana be two classical provers who are faced with
a classical verifier. The verifier sends Charlie a random clause
$c$, and Diana a random variable $v$ which appears in $c$. Charlie
should answer with the values that some satisfying assignment
gives the variables in $c$, and Diana should answer with the value
that same assignment gives $v$.  If the original formula is
$\gamma$-distant from being satisfiable, then the success
probability of Charlie and Diana is bounded by $1 -
\frac{\gamma}{3}$.

We prove a reduction from the quantum case to the classical one.
First, a simple lemma.

\begin{lem} \label{parseval}
If $\braket{u}{v} \leq 1/2$ and $|u|=|v|=|w| = 1$ then
$\braket{u}{w}
> 1-\epsilon \Rightarrow \braket{v}{w} < 1/2 + \frac{\sqrt{3 \epsilon}}{2} - \frac{\epsilon}{2}$.
\end{lem}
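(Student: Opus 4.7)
The plan is to decompose $v$ and $w$ along $u$ and its orthogonal complement, apply Cauchy--Schwarz to the orthogonal parts, and then optimize the resulting two-variable expression. Concretely, I would write $v = \braket{u}{v}\,u + v_\perp$ and $w = \braket{u}{w}\,u + w_\perp$ with $v_\perp,w_\perp \perp u$ of norms $\sqrt{1-\braket{u}{v}^2}$ and $\sqrt{1-\braket{u}{w}^2}$ respectively (forced by $\|v\|=\|w\|=1$). Then
\[
\braket{v}{w} \;=\; \braket{u}{v}\braket{u}{w} + \braket{v_\perp}{w_\perp} \;\leq\; \braket{u}{v}\braket{u}{w} + \sqrt{(1-\braket{u}{v}^2)(1-\braket{u}{w}^2)},
\]
using Cauchy--Schwarz on the orthogonal parts. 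This is the same as the spherical triangle inequality $\theta_{vw}\geq \theta_{uv}-\theta_{uw}$ combined with the cosine subtraction formula, where $\theta_{xy}=\arccos\braket{x}{y}$.

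Next I would substitute the hypotheses. Setting $a=\braket{u}{v}\leq 1/2$ and $b=\braket{u}{w}>1-\epsilon$, a short calculus exercise shows that $f(a,b)=ab+\sqrt{(1-a^2)(1-b^2)}$ is maximized on the admissible region at the corner $(a,b)=(1/2,1-\epsilon)$: the partial derivative in $a$ is positive on $a\in[-1,1/2]$ whenever $b>1/2$, and then $f(1/2,b)$ is decreasing in $b$ for $b>1/2$. Evaluating at the corner gives
\[
\braket{v}{w} \;\leq\; \tfrac{1-\epsilon}{2} + \tfrac{\sqrt{3}}{2}\sqrt{2\epsilon-\epsilon^2},
\]
and a routine bounding of $\sqrt{2\epsilon-\epsilon^2}$ against $\sqrt{\epsilon}$ recovers the stated form $\tfrac{1}{2}-\tfrac{\epsilon}{2}+\tfrac{\sqrt{3\epsilon}}{2}$. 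In the complex-amplitude case I would first absorb global phases into $v$ and $w$ so that $\braket{u}{v}$ and $\braket{u}{w}$ become real and nonnegative, which preserves both hypotheses and $|\braket{v}{w}|$.

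I do not foresee a substantive obstacle. The whole argument takes place inside the two-dimensional real subspace $\mathrm{span}(u, v_\perp/\|v_\perp\|)$ (into which $w$ may also be rotated), so the geometry is elementary. The one point deserving a moment of care is the corner-maximum step, which needs $1-\epsilon>1/2$ to guarantee that the partial derivatives have the stated signs; this restricts the lemma to the regime $\epsilon<1/2$, which is exactly the regime of interest in the rest of the paper.
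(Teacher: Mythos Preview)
Your decomposition and the Cauchy--Schwarz/angle-subtraction argument are exactly what the paper's one-line ``follows from Taylor's approximation'' is gesturing at, and your corner-maximization analysis is correct: the extremal configuration is $a=\braket{u}{v}=1/2$, $b=\braket{u}{w}\downarrow 1-\epsilon$, giving the sharp bound
\[
\braket{v}{w}\;<\;\frac{1-\epsilon}{2}+\frac{\sqrt{3}}{2}\sqrt{2\epsilon-\epsilon^2}.
\]

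The one genuine slip is your last sentence. The ``routine bounding of $\sqrt{2\epsilon-\epsilon^2}$ against $\sqrt{\epsilon}$'' goes the wrong way: for every $\epsilon\in(0,1)$ one has $2\epsilon-\epsilon^2>\epsilon$, hence $\sqrt{2\epsilon-\epsilon^2}>\sqrt{\epsilon}$, so your sharp bound is strictly \emph{larger} than the form $\tfrac12-\tfrac{\epsilon}{2}+\tfrac{\sqrt{3\epsilon}}{2}$ stated in the lemma. Concretely, with $u=(1,0)$, $v=(\tfrac12,\tfrac{\sqrt3}{2})$, and $w=(1-\epsilon',\sqrt{2\epsilon'-\epsilon'^2})$ for $\epsilon'$ just below $\epsilon=0.01$, one gets $\braket{v}{w}\approx 0.61$ while the stated bound is $\approx 0.58$. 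So the constant in the lemma is slightly off (it should read $\sqrt{6\epsilon}$, or better, keep your exact expression); this is a typo in the paper, not a flaw in your method. The paper only ever uses the consequence ``$\epsilon<0.01\Rightarrow\braket{v}{w}<0.99$'', which your correct bound yields comfortably, so nothing downstream is affected. Just replace your final line with the honest simplification $\sqrt{2\epsilon-\epsilon^2}\le\sqrt{2\epsilon}$ and state the bound as $\tfrac12-\tfrac{\epsilon}{2}+\tfrac{\sqrt{6\epsilon}}{2}$.
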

The proof follows from Taylor's approximation. A specific case: if
$\epsilon < 0.01$ then the final term is less than $0.99$.
Finally, let $\failprob(y, \tilde y, x, \tilde x, k)$ denote the
probability that the provers failed to convince the verifier,
given the measurement results $(y, \tilde y, x, \tilde x)$ and
$k$.

%

\begin{lem} \label{classical-lemma}
If there exists an index $k$, matrices $A_k$, $B_k$ and a set of
clauses $R \subset C$ such that
\begin{enumerate}
\item $|R| \ge (1 - \epsilon_{1}) M$

\item $\forall y \in R : \forall x \in y: |\{(\tilde y, \tilde x)
\in C \times V : \failprob(y, \tilde y, x, \tilde x,k) >
\epsilon_3 \}|
 < \epsilon_2 MN$

\item $\epsilon_3 < 1/200$
\end{enumerate}
then there is a classical strategy for Charlie and Diana which
gives them a success probability of at least $(1 - \epsilon_1)(1 -
\epsilon_2)(1 - \epsilon_3)(1 - 200 \epsilon_3)^2$.
\end{lem}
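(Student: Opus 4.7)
The plan is to construct a classical strategy for Charlie and Diana by having them jointly simulate Alice and Bob at the fixed measurement outcome $k$, exploiting shared randomness for every coordinate both quantum provers see simultaneously, and letting each classical prover locally guess the one remaining coordinate observable only by the other. Using shared randomness they jointly sample $(\tilde y, \tilde x)$ from the verifier's distribution, and they both keep the operators $A_k, B_k$ (which they can afford to store since they are computationally unbounded). On input $c$ Charlie samples $x^{\star} \in c$ uniformly from his private randomness and classically simulates Alice's final measurement on the classical input $(c, \tilde y, x^{\star}, \tilde x)$, returning her announced value for $T(y)$. On input $v$ Diana samples $y^{\star}$ uniformly among the five clauses containing $v$, simulates Bob on $(y^{\star}, \tilde y, v, \tilde x)$, and returns his announced value for $T(x)$.

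The straightforward factors in the success probability come from matching distributions to the lemma's hypotheses: the classical pair $(c, v)$ has exactly the same joint distribution as the verifier's $(y, x)$, so hypothesis~(1) gives $c \in R$ with probability $\ge 1 - \epsilon_1$, and then hypothesis~(2) gives $\failprob(c, \tilde y, v, \tilde x, k) \le \epsilon_3$ with probability $\ge 1 - \epsilon_2$. On this good event, if the quantum provers had been fed the genuine tuple $(c, \tilde y, v, \tilde x)$ then by definition of $\failprob$ the verifier would accept---and the announced assignment would be consistent between the two provers and satisfy $c$---with probability $\ge 1 - \epsilon_3$, contributing the third factor.

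The main obstacle is that Charlie and Diana do not feed their simulations the genuine tuple: Charlie uses $x^{\star} \ne v$ in general, and Diana uses $y^{\star} \ne c$. To handle this I would invoke Lemma \ref{parseval} as follows. A SWAP-test analysis shows that $\failprob \le \epsilon_3$ forces Alice's post-measurement state in $H^{v}_A \otimes H^{m}_A$ to have overlap exceeding $1 - O(\epsilon_3)$ with the target $\frac{1}{\sqrt 2}\bigl(\ket{yy}\ket{T(y)} + \ket{\tilde y \tilde y}\ket{T(\tilde y)}\bigr)$ determined by her classical announcement; whereas any two target states disagreeing on $T(y)$ or on $T(x)$ have inner product at most $1/2$, since the classical register distinguishing them sits in orthogonal standard basis states. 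Lemma \ref{parseval} then forces Alice to emit the same $T(y)$ value with probability at least $1 - 200\epsilon_3$ regardless of which $x \in c$ she was actually given, producing the factor $(1 - 200\epsilon_3)$ for Charlie's mismatched $x^{\star}$; applying the identical argument to Bob and his first input coordinate yields the second such factor. Multiplying the four independent factors produces the promised $(1 - \epsilon_1)(1 - \epsilon_2)(1 - \epsilon_3)(1 - 200\epsilon_3)^2$. The principal technical difficulty lies in this last step: pinning down the exact target states so that Lemma \ref{parseval} really yields the advertised $200\epsilon_3$ loss per mismatched coordinate.
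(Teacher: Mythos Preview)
Your strategy for Diana rests on a misreading of the protocol: Bob has no second round and never announces anything. Only Alice receives the classical tuple $(y,\tilde y,x,\tilde x)$ in step~3 and returns all eight bits $T(y),T(\tilde y),T(x),T(\tilde x)$; the verifier's SWAP test on Bob's register is performed against the target determined by \emph{Alice's} announced $T(x),T(\tilde x)$. So ``Diana simulates Bob on $(y^\star,\tilde y,v,\tilde x)$ and returns his announced value for $T(x)$'' is undefined. Patching this by having Diana simulate Alice instead does not rescue the accounting: Alice's round-1 state lives in $H^m_A\otimes H^p_A$ and depends on the clause pair, so Diana would have to fabricate that state using her guess $y^\star$, and you would then need both $y^\star\in R$ (an extra $(1-\epsilon_1)$-type factor) and the pair $(\tilde y,\tilde x)$ to be good for $(y^\star,v)$ as well (an extra $(1-\epsilon_2)$), so you do not recover the stated bound. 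The same remark applies, more mildly, to Charlie's guessed $x^\star$.

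The paper sidesteps all of this by never simulating the second round. Charlie samples only a random $\tilde y$, applies $A_k$ to the clause message $\frac{1}{\sqrt2}(\ket{c}+\ket{\tilde y})\ket{000}$, and outputs the $T(c)$-component of the \emph{nearest legal target} $\frac{1}{\sqrt2}(\ket{cc}\ket{T(c)}+\ket{\tilde y\tilde y}\ket{T(\tilde y)})$ in fidelity; no variable coordinate is needed at all. Diana does the symmetric thing with $B_k$ on $\frac{1}{\sqrt2}(\ket{v}+\ket{\tilde x})\ket{0}$ and reads off $T(v)$ from the nearest target to Bob's state; no clause coordinate is needed. A single good pair $(\tilde y,\tilde x)$ then suffices for both provers, and each $(1-200\epsilon_3)$ arises from Lemma~\ref{parseval} exactly as you anticipated: on the good event, if Alice's second-round announcement disagreed with the nearest target she would fail the corresponding SWAP test with probability exceeding $(1-0.99^2)/2>1/200$, so this disagreement occurs with probability at most $200\epsilon_3$. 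Your instinct to invoke Lemma~\ref{parseval} here is correct; the missing idea is to replace ``simulate the second-round announcement with a guessed coordinate'' by ``compute the nearest SWAP target directly'', which makes each classical prover's output depend only on its own input together with one privately sampled dummy coordinate.
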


\begin{proof}
Charlie gets as an input a clause $y$ from the verifier. He
chooses a random $\tilde y$, and simulates Alice, conjugating by
$A_k$. Then he finds the closest possible legal classical
description to the state, by choosing $T(y), T(\tilde y)$ to
maximize the fidelity.
Similarly, Diane simulates Bob with her input $x$.

The classical verifier chooses independently, and therefore with
probability at least $1 - \epsilon_{1}$ he chooses a clause from
$R$. With probability greater than $1 - \epsilon_2$ the provers
choose a pair $\tilde y, \tilde x$ for which Alice and Bob have
good success probability. If this is the case, Alice and Bob's
success probability is at least $1 - \epsilon_3$. Since Alice
passes the SWAP test with probability $1 - \epsilon_3$, the state
she sends in the first step must pass the SWAP test with the
classical description she sent in the second step with probability
$1 - \epsilon_3$. If the latter is not the closest possible
classical description, then her probability of failing the SWAP
test, using $\delta = 0.99$ to ensure that there are no closer
alternatives, is at least $(1-\delta^2)/2 > (1-0.99^2)/2 > 1/200$.
Thus, the probability that this occurs is bounded by $200
\epsilon_3$. So with probability at least $1 - 200 \epsilon_3$
Alice and Charlie send the same assignment; given that, there is
only an $\epsilon_3$ chance of failure for Charlie. Finally, the
same argument as before applies to Diane (simulating Bob), which
contributes another factor of $1 - 200 \epsilon_3$ (the other
factors have already been counted for both provers).
\end{proof}

\begin{lem}
If the failure probability of Alice and Bob given result $k$ is
less than $\frac{\gamma^3}{5.55 \cdot 10^{13}}$ then there exists
a set $R$ with the properties stated in Lemma
\ref{classical-lemma}, with $\epsilon_1 = 0.003 \gamma$,
$\epsilon_2 = \gamma 10^{-3}$ and $\epsilon_3 = \gamma 10^{-4}$.
\end{lem}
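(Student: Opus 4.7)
The plan is to combine the quantitative form of the probability lower bound (\ref{prob-bound}) with a Markov-type argument on $\failprob$ to carve out $R$ as those $y \in H$ for which almost all $(\tilde y, \tilde x)$ pairs have small failure probability.

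The first step is to sharpen (\ref{prob-bound}) under the current hypotheses. Since $NM\tilde W < 100 M W_A W_B$, the denominator $2MN\tilde W + 22 M W_A W_B$ is at most $222 M W_A W_B$. For any tuple with $y,\tilde y \in F$, $x \in y$, and $\tilde x \in G$, the lower bounds $A(y), A(\tilde y) \geq W_A/(5M)$ and $B(x), B(\tilde x) \geq W_B/(5N)$ provided by Lemmas \ref{small-nmw-lem2} and \ref{bob-caught} make each of the four $A(\cdot)B(\cdot)$ terms in the numerator at least $W_A W_B/(25MN)$, so the numerator is at least $4 W_A W_B/(25MN)$. This yields the near-uniform lower bound $\Pr(y,\tilde y, x, \tilde x \mid k) \geq 4/(5550 M^2 N)$.

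The second step is a Markov bound. Since the expected failure probability $\sum_r \Pr(r|k) \failprob(r,k)$ is at most $\gamma^3/(5.55 \cdot 10^{13})$ and $\epsilon_3 = \gamma 10^{-4}$, the total conditional mass of tuples with $\failprob > \epsilon_3$ is at most $\gamma^2/(5.55 \cdot 10^9)$. Dividing by the lower bound above, the number of bad tuples with $y \in H$, $\tilde y \in F$, $x \in y$ and $\tilde x \in G$ is at most $\gamma^2 M^2 N/(4 \cdot 10^6)$.

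The third step defines $R$ and verifies the three conditions. I would let $R$ be the set of $y \in H$ for which every $x \in y$ admits fewer than $T := \gamma MN/4000$ bad $(\tilde y, \tilde x) \in F \times G$. Each $y \in H \setminus R$ witnesses at least $T$ bad good-coordinate tuples, hence $|H \setminus R| \leq \gamma M/1000$ and $|R| \geq (1-0.002\gamma)M - \gamma M/1000 = (1-0.003\gamma)M = (1-\epsilon_1)M$. For $y \in R$ and $x \in y$, the bad $(\tilde y,\tilde x)$ in $C \times V$ split into those inside $F \times G$ (fewer than $T$) and those with $\tilde y \notin F$ or $\tilde x \notin G$ (at most $|C \setminus F| N + M |V \setminus G| \leq 4 \cdot 10^{-4}\gamma MN$, using $|F| \geq (1-0.0002\gamma)M$ and $|G| \geq (1-0.0002\gamma)N$). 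The total is at most $6.5 \cdot 10^{-4}\gamma MN < 10^{-3}\gamma MN = \epsilon_2 MN$, and $\epsilon_3 = \gamma 10^{-4} < 1/200$ holds automatically. The only delicate aspect is the numerical bookkeeping: $T$ has to be chosen so that both $|H \setminus R|$ and the leftover bad-pair count fit inside their respective budgets simultaneously, which is exactly why the theorem's assumed failure bound scales like $\gamma^3$ rather than $\gamma^2$.
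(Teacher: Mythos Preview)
Your proof is correct and follows the same overall strategy as the paper: establish a near-uniform lower bound on $\Pr(y,\tilde y,x,\tilde x\mid k)$ over a large set of tuples, then use a Markov-type argument on $\failprob$ to define $R$ as those $y$ for which few $(\tilde y,\tilde x)$ are bad. The one step you skip is explicitly invoking Theorem~\ref{large-NMW} (and the ``caught cheating'' alternatives of Lemmas~\ref{small-nmw-clause-steps} and~\ref{bob-caught}) to justify that $NM\tilde W<100MW_AW_B$ and that $F,G$ exist; this follows because the assumed failure bound $\gamma^3/(5.55\cdot10^{13})$ is below all of those catching thresholds, but you should say so.

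The only genuine difference from the paper is in how the probability lower bound is used. The paper keeps just the single term $A(y)B(x)$ in the numerator of~(\ref{prob-bound}), which gives the slightly weaker bound $1/(5.55\cdot10^3NM^2)$ but holds for \emph{all} $(\tilde y,\tilde x)\in C\times V$; it can therefore define $L(y,x)$ directly over $C\times V$ and read off $|H_{\fail}|\le 10^{-3}\gamma M$ without ever splitting cases. You instead exploit all four terms to get a factor-of-4 stronger bound on the restricted domain $F\times G$, and then pay separately for the complement $(C\times V)\setminus(F\times G)$ using the size bounds on $F,G$. Both routes close with room to spare; the paper's is cleaner (no case split, no auxiliary threshold $T$), while yours makes explicit where the $\gamma^3$ scaling is actually needed---namely in getting $|H\setminus R|$ and the residual bad-pair count to fit their budgets simultaneously.
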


\begin{proof}
Since the failure probability is less than $\frac{1}{6.96 \cdot
10^9}$, we must have, by Theorem \ref{large-NMW}, that $NM\tilde W
< 100 M W_A W_B$. By Lemmas \ref{small-nmw-clause-steps},
\ref{small-nmw-lem2} and \ref{bob-caught}, we have a set $H$ such
that $|H| \ge (1-0.002\gamma)M$, and
\[ \forall y \in H : \forall x
\in y: A(y) > W_A / (5M) \wedge B(x)> W_B / (5N) \]

\noindent Using $(\ref{prob-bound})$, this means that for any
tuple $(y, \tilde y, x, \tilde x) \in H$

$$\Pr(y, \tilde y, x\ \tilde x|k) \ge \frac{A(y)B(x)}{222MW_A W_B}
\ge \frac{W_A W_B}{25M N \cdot 222 M W_A W_B} = \frac{1}{5.55
\cdot 10^3 N M^2}$$

Denote $L(y,x) = \{(\tilde y, \tilde x): \failprob(y,\tilde y, x,
\tilde x,k) > 10^{-4} \gamma \}$, and $H_{\fail} = \{y \in H :
\exists x \in y: |L(y,x)| > 10^{-3} \gamma NM\}$. For any clause
$y \in H_{\fail}$, let $\fail(y) \in y$ denote the variable in $y$
for which $L(y,x)$ is maximal. We  bound Alice and Bob's failure
probability from below, to get an upper bound on $|H_{\fail}|$
\begin{eqnarray*}
\textrm{Pr}(\textrm{Provers fail to cheat}) & \ge & \sum_{y \in
H_{\fail}, x \in y} \sum_{(\tilde y, \tilde x) \in L(y,x)}
\failprob(y,\tilde y, x, \tilde x,k) \Pr (y, \tilde y, x, \tilde x
| k)
\\
& \ge & \sum_{y \in H_{\fail}} \sum_{(\tilde y, \tilde x) \in
L(y,x)} \gamma 10^{-4} \Pr (y, \tilde y, \fail(y),  \tilde x : k) \\
& \ge & \sum_{y \in H_{\fail}} \gamma 10^{-4} |L(y,\fail(y))| \Pr
(y, \tilde y, \fail(y),  \tilde x : k)
\\
& \ge & \sum_{y \in H_{\fail}} \frac{\gamma^2 NM W_{A} W_{B}}{25NM
\cdot 10^7\cdot 222M W_{A} W_{B}} = \frac{\gamma^2 |H_{\fail}|}{M
\cdot 5.55 \cdot 10^{10}}
\end{eqnarray*}
\noindent Where the last inequality comes from taking a tuple in
$H$. As Pr(The provers fail) $< \frac{\gamma^3}{5.55 \cdot
10^{13}}$, we have

\[ |H_{\fail}| <  \frac{\gamma^3}{5.55
\cdot 10^{13}} \cdot \frac{M \cdot 5.55 \cdot 10^{10}}{\gamma^2} =
10^{-3} \gamma M
\]
Taking $R = H \backslash H_{\fail}$, we get $|R| \ge (1 - 0.002
\gamma) M - |H_{\fail}| \ge (1 - 0.003 \gamma) M$ as required.
\end{proof}


\noindent {\it Proof of theorem \ref{main-thm}. } Assume $\Phi$ is
not satisfiable, and assume by contradiction that the provers had
some strategy which would work with success probability $\ge 1 -
\frac{\gamma^3}{5.55 \cdot 10^{13}}$. Then there has to be a
measurement result $k$ such that the success probability given $k$
is at least $1 - \frac{\gamma^3}{5.55 \cdot 10^{13}}$. However,
according to the previous lemma, either the provers are caught
with probability greater than $\frac{\gamma^3}{5.55 \cdot
10^{13}}$ (which contradicts our assumption on the success
probability), or there exists a set $R$ as in the premises of that
lemma. However, this would imply that there is a strategy in the
classical protocol with success probability $> (1 - 0.003 \gamma)
(1- \gamma 10^{-3})(1 - \gamma 10^{-4})(1 - 200 \gamma 10^{-4})^2
> 1 - \gamma/3$, which is a contradiction. \qedsymbol
\section{Conclusions and Open Problems}
We have shown that NEXP can be recognized in a quantum MIP
protocol, even if the provers have unlimited classical
communication between them. Our protocol achieves perfect
completeness and a constant gap. It only sends $O(\log(N))$
qubits, and thus can also be used for NP-complete languages with a
polylogarithmic communication. Some interesting questions still
remain open:

\begin{itemize}

\item What is the correct upper bound on the power of this proof
system? Note that if the provers were allowed to make any joint
separable measurement it would be exactly NEXP. Does adding
provers or communication rounds help? What happens if there is
just one quantum round?

\item Is there a parallel repetition lemma for protocols when the
provers are allowed to communicate with each other? The original
proof of \cite{Raz95} does not apply here.

\item What happens in the dual problem, when the provers are
allowed to share entanglement but are not allowed to communicate?
Does our protocol still work, with a different proof?

\item Does our result hold when the provers have a bounded amount
of entanglement in addition to their communication channel?

\end{itemize}

\section{Acknowledgments}\label{sec:ack}
A. H. wishes to thank Scott Aaronson, Dorit Aharonov, Richard
Cleve,  Irit Dinur, Julia Kempe, Debbie Lueng, Oded Regev, Peter
Shor, John Watrous and many others for their generous help.

\newpage
\pagebreak \setcounter{page}{1}

\appendix

\section{Calculating Probabilities}\label{calc-prob}
Let $r = (y, \tilde y, x, \tilde x)$. We wish to estimate
$\Pr(r|k)$.  Bayes' rule gives
\[\Pr(r | k) = \frac{\Pr(k|r) \Pr(r)}{\Pr(k)} = \frac{\Pr(k|r) \Pr(r)}{\sum_{s} \Pr(k |s) \Pr(s)}
\]

\noindent where $s$ denotes any legal tuple $s = (c, \tilde c, v,
\tilde v)$ with $c, \tilde c \in C$, $v, \tilde v \in V$ and $v
\in c$. As the prior distribution for all legal tuples is
identical, we are only interested in calculating $\Pr(k|s)$ for
any legal tuple $s = (c, \tilde c, v, \tilde v)$.

In the protocol we presented, the provers first apply their
measurement and get $k$, and then the verifier measures to get
$s$. However, it is physically equivalent to assume the verifier
measured first. As the states sent to the provers are unentangled
after tracing out the verifier, we have that

\[\Pr(k|s) = \Tr((I \otimes A_k) \rho_A (I \otimes A_k )^\dagger)
\cdot \Tr((I \otimes B_k) \rho_B (I \otimes B_k )^\dagger) \]

Where $\rho_A$ is the state in $H^v_A \otimes H^M_A$, $\rho_B$ is
the state in $H^v_B \otimes H^M_B$, and the identity is applied on
the verifier's side.

When considering states in $H^v_A \otimes H^m_A \otimes H^p_A$, we
stick to the convention that the first $m$ qubits define the
verifier's private space, then next $m+3$ describe the message
qubits, and the last $d$ define Alice's private space. We can now
calculate

\[A_k(y) = \Tr(A_k (\ket{y}
\bra{y} \otimes I)A_k) = \sum_{j = 1}^{8Md}
\sum_{h=8d(y-1)+1}^{8Dy} A_k[j,h] \overline{A_k[j,h]} = \sum_{j =
1}^{8Md} \sum_{h=8d(y-1)+1}^{8Di} |A_k[j,h]|^2\]

\[B_k(x) = \Tr(B_k (\ket{x}
\bra{x} \otimes I)B_k) = \sum_{j =
1}^{2Nd}\sum_{h=2d(x-1)+1}^{2dx} B_k[j,h] \overline{B_k[j,h]} =
\sum_{j = 1}^{2Nd}\sum_{h=2d(x-1)+1}^{2di} |B_k[j,h]|^2\]

\noindent We now assume that the $x, \tilde x$ is being traced out,
and only look at the probabilities for $y, \tilde y$,
generated from $\Tr((I \otimes A_k) \rho_A (I \otimes A_k
)^\dagger)$. As $A_k(y)$ is just the trace out of the private data
and the qubits which fit the assignment, then $A_k(y) = \Tr((I
\otimes A_k) \rho_A (I \otimes A_k )^\dagger)$. We are analyzing
the following expression:
\[ \Tr (I_{M} \otimes A_{8Md} \rho_A (I_{M} \otimes A_{8Md})^{\dagger}) \]
Up to normalization, $\rho_A$ is a matrix which contains exactly
four $1$s, arranged: $(a,a), (a,b), (b,a), (b,b)$. However, as we
shall soon see, either $a=b$ (in which case we have a single cell
with a $4$ in it) or else $|a-b| \geq 8Md$ and thus, by the
previous paragraph, we can ignore the off-diagonal entries. In
both cases we can restrict our attention to the diagonal entries.

Thus the structure of the $\rho_A$ matrix is:
\[ \rho = \frac{1}{\sqrt{2}} (\ket{yy} + \ket{\tilde y \tilde y})
\otimes \ket{000} \bra{000} (\bra{yy} + \bra{\tilde y \tilde y})
\frac{1}{\sqrt{2}} \otimes \ket{0_d} \bra{0_d} \in H^A_{v} \otimes
H^A_M \otimes H^A_p
\]
Note that the term $0_d$ refers to element in a space of dimension
$d$, as opposed to $000$, an element in a space of dimension
$2^3$. If $y=\tilde y$ then obviously there is only one nonzero
cell in the final matrix, on the diagonal. Otherwise, since
$\ket{yy}$ is located in the cell $My+y = (M+1)y$, and $\tilde y
\neq y$, they are differentiated (after tensoring) by at least
$(M+1) \cdot 8 \cdot d > 8Md$, as required.

Let  $A_k(i) = \sum_{j = 1}^{8Md} \sum_{h=8d(i-1)+1}^{8Di}
A_k[j,h] \overline{A_k[j,h]} = \sum_{j = 1}^{8Md}
\sum_{h=8d(i-1)+1}^{8Di} |A_k[j,h]|^2$. The probability that the
verifier measures $y, \tilde y$ in the modified protocol given $k$
is
\begin{eqnarray*}
P(y,\tilde y | k) & = & \frac{P(k | y, \tilde y) P(y, \tilde
y)}{P(k)} \\
& = & \frac{P(k | y, \tilde y) P(y, \tilde y)}{\sum_{z, \tilde z}
P(k | z, \tilde z) P(z, \tilde z)} \\
& = & \frac{\Tr (A_k \rho_{y, \tilde y} A_k^\dagger)}{\sum_{z,
\tilde z} \Tr (A_k \rho_{z, \tilde z} A_k^\dagger)} \\
\mbox{(equal unless $y=\tilde y$) \hspace{2cm}} & \geq &
\frac{A_k(y) + A_k(\tilde y)}{\sum_{z \neq \tilde z} (A_k(z) +
A_k(\tilde z)) + \sum_z 4 A_k(z)} \\
& = & \frac{A_k(y) + A_k(\tilde y)}{\sum_{z, \tilde z} (A_k(z) +
A_k(\tilde z)) + \sum_z 2 A_k(z)} \\
& = & \frac{A_k(y) + A_k(\tilde y)}{2 M W_{A_k} + 2 W_{A_k}}
\end{eqnarray*}
where $W_{A_k}$ is the total weight: $W_{A_k} = \sum_z A_k(z)$.
Note that if $y= \tilde y$ we use $4A_k(y)$ instead of $A_k(y) +
A_k(\tilde y)$.

\subsection{Bounding the Denominator}

Let $W_{A_k} = \sum_{i}A_k(i), W_{B_k} = \sum_{i}B_k(i), \tilde W
= \Sigma_{c \in C, v\in c} A_k(c)B_k(v)$. We want to bound the
denominator in

\begin{eqnarray*}
\Pr (y, \tilde y, x, \tilde x |k) = \frac{(A(y) + A(\tilde
y))(B(x) + B(\tilde x))}{\sum_{c, \tilde c \in C, v \in c, \tilde
v \in V} \Pr (c, \tilde c, v, \tilde v |k)}
\end{eqnarray*}


Note that if $c = \tilde c$, then $\Tr((I \otimes A_k) \rho_A (I
\otimes A_k )^\dagger) = 4 A_k(c)$. However, when $c \neq  \tilde
c$, we account this twice (because any of them can be considered
first in the sum). Thus, the denominator becomes

\begin{equation} \label{ctildec} \sum_{c, \tilde c} \sum_{v \in c, \tilde v} (A_k(c) +
A_k(\tilde c))  (B_k(v) + B_k(\tilde v)) + 2 \left( \sum_{c =
\tilde c, v, \tilde v} + \sum_{c, \tilde c, v = \tilde v}\right) +
4 \sum_{c = \tilde c,v = \tilde v}
\end{equation}

\noindent where all the sums are on $(A_k(c) + A_k(\tilde c))
(B_k(v) + B_k(\tilde v))$, and factors of two and four come from
$c = \tilde c$, and $v = \tilde v$. We begin by bounding the first
two sums (which will contribute most of the weight). We omit the
subindex $k$.

\[ \sum_{c, \tilde c} \sum_{v \in c, \tilde v} (A(c) + A(\tilde
c))  (B(v) + B(\tilde v)) = \sum_{c, \tilde c} \sum_{v \in c,
\tilde v} A(c)B(v) + A(c)B(\tilde v) + A(\tilde c)B(v) + A(\tilde
c)B(\tilde v) \]

We now look at each of the four terms separately:

\[ \sum_{c, \tilde c} \sum_{v \in c, \tilde v} A(c)B(v) = MN \sum_{v \in c, \tilde v} A(c)B(v) MN \tilde
W \]
\[
\sum_{c, \tilde c} \sum_{v \in c, \tilde v} A(c)B(\tilde v) = 3M
\sum_{c \in C, \tilde v \in V} A(c)B(\tilde v) = 3M W_A W_B \]

\[ \sum_{c, \tilde c} \sum_{v \in c, \tilde v} A(\tilde c)B(v) = 5N
W_A W_B < 5M W_A W_B
\]

\noindent And $\sum_{c, \tilde c} \sum_{v \in c, \tilde v}
A(\tilde c)B(\tilde v) = 3M W_A W_B$. We used the fact that $\Phi$
is $3-SAT$, and that each variable appears exactly 5 times.

We return to bounding the sums in $(\ref{ctildec})$. By fixing
$c$, we get that if $c = \tilde c$ the second sum is bounded,
relative to the first, by a factor of $2/M$. Fixing $\tilde v$, we
can bound the third sum by a factor of $2/N$. Fixing both, the
fourth sum is bounded by a factor of $4/(MN)$. We get an overall
bound for the denominator of:
\[
(M N \tilde W + 3 M W_A W_B + 5 N W_A W_B + 3 M W_A W_B) (1 + 2/M
+ 2/N + 4/(MN))
\]
%

Since $M$ and $N$ are arbitrarily large, and $M \geq N$, we deduce
our bound:
\[ 2 (M N \tilde W + 11 M W_A W_B) \]

which finally gives

\begin{equation*}
\Pr (y, \tilde y, x, \tilde x |k) \geq \frac{A(y)B(x) +
A(y)B(\tilde x) + A(\tilde y)B(x) + A(\tilde y)B(\tilde x)}{2 MN
\tilde W + 22 M W_{A} W_{B}}
\end{equation*}

\section{Proof of Lemma \ref{alice-damage}}\label{app-alice-damage}

{\bf Lemma \ref{alice-damage}} {\it Assume $A(y) \ge p A(\tilde
y)$, $p > 1$. Then for any assignment $T$, the probability that
the verifier will catch Alice cheating in the SWAP test is at
least $\frac{1}{2} - \frac{\sqrt{p}}{1+p}$.}

\begin{proof}
Let $\sigma = \Tr_{H^{p}_A} \frac{(I \otimes A_k) \rho (I \otimes
A_k)^\dagger}{\Tr((I \otimes A_k) \rho (I \otimes A_k)^\dagger)}$,
and $\ket{\psi} = 1/\sqrt{2}(\ket{yy} \ket{T(y)} + \ket{\tilde y
\tilde y} \ket{T(\tilde y)})$. Taking $\delta = \sqrt{\bra{\psi}
\sigma \ket{\psi}}$ the fidelity between $\ket{\psi}$ and $\rho$,
the SWAP test has probability at least $\frac{1 - \delta^2}{2}$ to
distinguish between them \cite{BCWW01}.

To calculate $\sigma$, we utilize the result in Appendix
\ref{calc-prob}. Since $\rho$ consists of four elements in a
rectangle $((8M+8)y,(8M+8)y), ((8M+8)y,(8M+8)\tilde y),
((8M+8)\tilde y,(8M+8)y), ((8M+8)\tilde y,(8M+8)\tilde y)$,
differentiated by a distance of at least $8M$, the nondiagonal
elements do not contribute to the trace.


For a given assignment $T(y) \in \{0,1\}^3$ and $T(\tilde y) \in
\{0,1\}^3$, let $\ket{\psi} = 1/\sqrt{2}(\ket{yy} \ket{T(y)} +
\ket{\tilde y \tilde y} \ket{T(\tilde y)})$. The fidelity between
the pure state $\psi$ and the quantum state is $\sqrt{\bra{\psi}
\sigma \ket{\psi}}$.

$\ket{\psi}$ is an equal superposition of two base vectors, one
corresponding to the base state $\ket{yyT(y)}$ and the other to
$\ket{\tilde y \tilde y T(\tilde y)}$. Thus the multiplication is
effectively the sum of four elements arranged in a rectangle
(multiplied by $1/2$). To calculate each of these four elements,
we turn to Appendix \ref{calc-prob}. Since, in the tensor product
$I \otimes A_k$, any cell whose two coordinates differ by at least
$8M$ is zero, we can simplify and get:

\begin{eqnarray*}
\sigma[yy T (y), yy T (y)] = \Tr(A \ket{y T(y)} \bra{y T(y)}A)
\\
\sigma[yy T (y), \tilde y \tilde y T (\tilde y)] = \Tr(A \ket{y
T(y)} \bra{\tilde y T(\tilde y)}A)
\end{eqnarray*}

We write the elements, sticking to the convention that the first
$m$ qubits describe the verifier's private space, the next $m$ fit
the clause in the message space and the last three fit the value
of the assignment:

\begin{eqnarray*}
\sigma(8My + 8y + a, 8My + 8y + a) & = & \sum_{i=1}^d \sum_{j=1}^d |A[8dy + da + i,8dy + da + j]|^2 \\
\sigma((8M+8)\tilde y + b, (8M+8)\tilde y + b) & = & \sum_i \sum_j |A[8d\tilde y + db + i,8d \tilde y + db + j]|^2 \\
\sigma((8M+8)y + a, (8M+8)\tilde y + b) & = & \sum_i \sum_j A[8dy
+ da + i,8dy + da +j]
\overline{A[8d\tilde y + db + i, 8d \tilde y + db + j]} \\
\sigma((8M+8)\tilde y + b, (8M+8)y + a) & = & \sum_i \sum_j A[8d
\tilde y + db + i, 8d \tilde y + db + j] \overline{A[8dy + da + i,
8dy + da + j]}
\end{eqnarray*}
Note that as $A A^\dagger$ is a measurement operator, we have that
$A(y) \leq 1$, so $A(\tilde y) \leq 1/p$. Now calculating,
reindexing by $s=8My + 8y + T(y)$ and $t = 8M\tilde y + 8 \tilde y
+ T(\tilde y)$, and folding the sum into the expression, we get:
\begin{eqnarray*}
\frac{|\sigma[s,s]|^2 + |\sigma[t,t]|^2 + \sigma[s,t]
\overline{\sigma[t,s]} + \sigma[t,s]
\overline{\sigma[s,t]}}{2(A(y) + A(\tilde y))} & \leq & \frac{A(y)
+ A(\tilde y) + 2 \sqrt{A(y) A(\tilde y)}}{2(A(y) + A(\tilde y))} \\
& = & \frac{1}{2} + \frac{\sqrt{A(y) A(\tilde y)}}{A(y) + A(\tilde
y)} < 1
\end{eqnarray*}
The last inequality follows from the AM-GM inequality. More
precisely, since the ratio is at least $p$, the extreme value is
achieved when it is exactly $p$, which (when substituting) gives
what we need. When $p \ge \sqrt{2}$, this gives $1/2 + 2^{1/4} /
(1 + \sqrt{2}) \le 0.993$, as required. When $p \ge 2$, we get
$1/2 - \sqrt{2}/3 \le 0.975$.
\end{proof}

\section{Proofs for Lemmas \ref{steps} and \ref{T-steps}} \label{app-large-nmw}
Remember that for $c \in C$,  $u(c) = \Sigma_{v \in c} A(c)B(v)$,
and for $S \subset C$, $U(S) = \sum_{c \in S} u(c)$. We also
defined
\[
S_{i} = \left\{c : \frac{\tilde W}{2^{i+1}} < u(c) \le
\frac{\tilde W}{2^{i}} \right\}
\]

\noindent {\bf Lemma \ref{steps}.} {\it If there exists an index
$j$ such that $\sum_{i=0}^{j-1} U(S_i)
>\tilde W / 100$ and $\sum_{i=j+1}^{\infty} U(S_i) > \tilde W /
100$, then the provers get caught with constant probability
$\frac{1}{6.96 \cdot 10^9}$, generated from a $(\frac{1}{4.8 \cdot
10^{7}}, \sqrt{2})$ bad set.}

\begin{proof}
We construct such a bad set $D$. For any clause $c$ in variables
$v_1, v_2, v_3$, let $\vmax(c)$ denote the variable $v_i \in c$
such that $B[v_i] = \max\{B[v_j] : j = 1,2,3\}$, and define
$\vmin(c)$ analogously. Let $S_{\ups} = \cup_{i=0}^{j-1} S_i$, and
$S_{\downs} = \cup_{i=j+1}^{\infty} S_i$. As $U(S_{\downs}) >
\tilde W /100$,  and $S_{down}$ consists of ``light'' clauses, we
must have $|S_{\downs}|
> M/100$. Partition $S_{\downs}$ arbitrarily into two sets $S_{l}$
and $S_{r}$, such that $|S_{l}|, |S_{r}| \ge M/200$. The idea is
that each clause in $S_{\ups}$ will contribute $|S_{l}| \cdot
|S_{r}|$ elements to $D$.

For each $c_{\ups} \in S_{\ups}, c_{l} \in S_{l}, c_{r} \in
S_{r}$, we have $u(c_{\ups}) > 2u(c_l), u(c_{\ups}) > 2u(c_r)$.
Taking the maximal element in the sum for $c_{\ups}$, and the
minimal element for $c_l, c_r$, we get:
\begin{eqnarray*}
A(c_{\ups}) B(\vmax(c_{\ups})) & > & 2A(c_{l}) B(\vmin(c_{l})) \\
A(c_{\ups}) B(\vmax(c_{\ups}))  & > & 2A(c_{r}) B(\vmin(c_{r}))
\end{eqnarray*}
Assume WLOG that $A(c_l) < A(c_r)$. Then
\begin{eqnarray*}
A(c_{\ups}) B(\vmax(c_{\ups})) & > & 2A(c_{l}) B(\vmin(c_{r}))
\end{eqnarray*}
So in the tuple $(c_{\ups}, c_{l}, \vmax(c_{\ups}),
\vmin({c_{r}}))$ at least one of the provers damages the state by
at least $\sqrt{2}$. We add this tuple to $D$. Note that we have
added a (distinct) element to $D$  for each of the $|S_l| \cdot
|S_r|$ choices of $c_l, c_r$, as desired. Let $D_\ups$ denote the
elements contributed to $D$ by $c_{\ups}$.

The next step is to prove that $D$ has constant probability. Note
that under the conditions of the lemma, we have
\[
\sum_{y, \tilde y \in C, x \in y, \tilde x \in V} \Pr(y,\tilde y,
x, \tilde x|k) < 22 M W_{A}W_{B} + 2 NM \tilde W < 4NM \tilde W
\]
\begin{eqnarray*}
\sum_{(y, \tilde y, x, \tilde x) \in D} \Pr((y, \tilde y, x,
\tilde x) : k) & = & \sum_{c_{\ups}} \sum_{(\tilde y, \tilde x)
\in D(c_{ups})} \Pr ((c_{\ups}, \tilde y , \max(c_{\ups}), \tilde
x) : k) \\
& \ge & \frac{1}{4NM \tilde W} \sum_{c_{\ups}} \sum_{(\tilde y,
\tilde x) \in D(c_{\ups})} A(c_{\ups})B(\max(c_{\ups}))
\\
& \geq & \frac{1}{4NM \tilde W}  |S_{l}| \cdot |S_{r}|
\sum_{c_{\ups}} A(c_{\ups})B(\max(c_{\ups}))
\\
& \ge & \frac{1}{4NM \tilde W}  |S_{l}| \cdot |S_{r}|
\sum_{c_{\ups}} u(c_{\ups})/3
\\
& \ge & \frac{1}{4NM \tilde W}  |S_{l}| \cdot |S_{r}|
\tilde W / 300 \\
& \ge & \frac{1}{4NM \tilde W} \cdot \frac{M}{200}
\cdot \frac{M}{200} \cdot \frac{\tilde W}{300} \\
\mbox{(because $M>N$)\hspace{2cm}} & > & \frac{NM \tilde W}{4NM
\tilde W \cdot 200 \cdot
200 \cdot 300} \\
& = & \frac{1}{4.8 \cdot 10^{7}}
\end{eqnarray*}
\end{proof}

Remember that $F = S_j \cup S_{j+1}$ and $T_{i} = \{c \in F:
\frac{W_{A}}{2^{i+1}} < A(c) \le \frac{W_A}{2^{i}} \}$. We wish to
prove

\noindent {\bf Lemma \ref{T-steps}.} {\it If there exists an index
$j$ such that $\sum_{i=0}^{j-1} |T_i|
> |F|/ 100$, and $\sum_{i=j+1}^{\infty} |T_i| > |F| / 100$, then
the first prover gets caught with constant probability
$\frac{1}{4.2 \cdot 10^7}$, generated from a  $(\frac{1}{1.2 \cdot
10^{6}},2)$ bad set.}

\begin{proof}
Let $T_{\ups} = \cup_{i=0}^{j-1}T_{i}$, $T_{\downs} = \cup_{i=j +
1}^{\infty} T_{i}$. Note that any clause from $T_{\ups}$ at least
2-damages any clause in $T_{\downs}$. Take $D = \cup_{c_{\ups} \in
T_{\ups}} \{ \{c_{\ups} \} \times T_{\downs} \times \{
\vmax(c_{\ups}) \} \times V \}$. Note that $|T_{\downs}| > 0.98M /
100 > M /200$, and as $T_{\ups} \subset F$, we have $U(T_{\ups})
\ge \frac{0.98 \tilde W}{400} \ge \frac{\tilde W}{500}$, and thus
\[\Pr(D) \geq \frac{1}{4NM \tilde W}
|T_{\downs}| N \frac{\tilde W }{1500} \ge \frac{1}{1.2 \cdot
10^{6}}
\]
\end{proof}

\section{Proofs for Lemmas \ref{small-nmw-clause-steps}, \ref{small-nmw-lem2} and \ref{bob-caught}}\label{app-small-nmw}
The proofs in this appendix are very similar and very easy. We
recall some definitions, then state the lemmas. Define $S_i = \{ c
\in C : \frac{W_A}{2^{i+1}} \le A(c) < \frac{W_{A}}{2^i} \}$. For
a set $S \subset C$, let $W(S) = \Sigma_{c \in S} A(c)$.

\noindent {\bf Lemma \ref{small-nmw-clause-steps}.} {\it If
$NM\tilde W < 100 M W_A W_B$ and there exists an index $i$ such
that
\[\sum_{j = 0}^{i-1} W(S_j) > \gamma 10^{-4} W_A \bigwedge
\sum_{j = i + 1}^{\infty} |S_j| > \gamma 10^{-4} M\]
\noindent
then Alice is caught cheating with probability
$\frac{\gamma^2}{2.6 \cdot 10^{12}}$, generated from a
$(\frac{\gamma^2}{7.4 \cdot 10^{10}}, 2)$ bad set.}

\begin{proof}
Let $S_{\ups} = \cup_{j=0}^{i-1} S_j$, $S_{\downs} =
\cup_{j=i+1}^{\infty} S_j$. Let $D = \cup_{c \in S_{\ups}} \cup_{v
\in c} \{c\} \times S_{\downs} \times \{v\} \times V$. Every $(y,
\tilde y, x, \tilde x) \in D$ is $2$-damaged by Alice. On the
other hand,
\[
\Pr (y , \tilde y, x  , \tilde x | k)
\stackrel{(\ref{prob-bound})}{\ge} \frac{(A(y)+A(\tilde y))(B(x) +
B(\tilde x))}{22 M W_{A} W_{B} + 2 NM \tilde W} \ge
\frac{A(y)B(\tilde x)}{22 M W_{A} W_{B} + 2 NM \tilde W} \ge
\frac{A(y)B(\tilde x)}{222 M W_{A} W_{B}}
\]
Summing this over $D$ gives
\[
\Pr(D) \ge \sum_{y \in S_{\ups}} \sum_{x \in y} \sum_{\tilde y \in
S_\downs} \sum_{\tilde x \in V} \frac{A(y)B(\tilde x)}{222 M W_{A}
W_{B}} \geq \sum_{y \in S_{\ups}} \frac{3 \cdot 10^{-4} \gamma M
W_{B} A(y)}{222M W_{A} W_{B}} \ge \frac{3 \gamma^2 W_{A}}{222
\cdot 10^8 W_{A}} \ge \frac{\gamma^2}{7.4 \cdot 10^{10}}
\]
\end{proof}

\noindent{\bf Lemma \ref{small-nmw-clause-steps}.} {\it If
$NM\tilde W < 100 M W_A W_B$ and the second condition of Lemma
\ref{small-nmw-clause-steps} does not hold, then there exists an
index $i$ such that for $F = S_{i} \cup S_{i+1}$ we have $|F| \ge
(1 - 0.0002 \gamma) M \bigwedge W(F) \ge (1 - 0.0002 \gamma) W_A
\bigwedge \forall c \in F : A(c) \ge \frac{W_{A}}{5M} $.}

\begin{proof}
Choose $t$ to be the smallest index for which the first half of
the condition does hold, i.e., $\sum_{j=0}^{t-1} W(S_j) > \gamma
10^{-4} W_A$. Then the second half of the condition cannot hold,
i.e.
\[ \sum_{j=t+1}^{\infty} |S_j| \le \gamma 10^{-4} M \]
Take $i=t-1$ (note that $t \neq 0$ because otherwise the first
half of the condition does not hold). So:
\[ |S_i| + |S_{i+1}| = M - \sum_{j=0}^{i-1} |S_j| -
\sum_{j=i+2}^{\infty} |S_j| \geq M - \sum_{j=0}^{i-1} |S_j| -
\gamma 10^{-4} M \geq M - \gamma 10^{-4} M - \gamma 10^{-4} M
\]
where the last inequality follows since the total weight
$\sum_{j=0}^{i-1} W(S_j) < \gamma 10^{-4} W_A$, but each clause in
the $S_j$s contributes at least $2^{-i} W_A$ to $W(S_j)$ while
each clause outside of the $S_j$s contributes at most $2^{-i-1}
W_A$. A similar argument now applies to the weight $W(S_i) +
W(S_{i+1})$. Finally, for each $c \in F$ we have
\[ A(c) \ge
\frac{W(F)}{4 |F|} \ge \frac{(1 - 0.0002 \gamma) W_{A}}{4M} \ge
\frac{W_{A}}{5M} \]
\end{proof}

Remember $S_i$: $T_i = \left\{ v \in V : \frac{W_B}{2^{i+1}} \le
B(v) < \frac{W_{B}}{2^i} \right\}$. We now prove

\noindent {\bf Lemma \ref{bob-caught}.} {\it Either Bob gets
caught cheating with probability $\frac{\gamma^2}{3.9 \cdot
10^{12}}$ which is generated from a $\frac{\gamma^2}{1.1 \cdot
10^{11}},2$ bad set, or else there exists an index $i$ such that
for $G = T_{i} \cup T_{i + 1}$ we have $|G| > (1 - 0.0002 \gamma)
N$, $\Sigma_{v \in G} B(v) \ge (1 - 0.0002 \gamma) W_{B}$ and for
each $v \in G$, $B(v) \ge \frac{W_{B}}{5N}$.}

\begin{proof}
If no such index exists then there is a separating index $i$
such that letting $T_{\ups} = \cup_{j=0}^{i-1} S_j$, $T_{\downs} =
\cup_{j=i+1}^{\infty} S_j$, we have $\sum_{v \in T_{\ups}} B(v) >
10^{-4} \gamma W_{B}$, $|T_{\downs}| > 10^{-4} \gamma N$. Let $D =
\cup_{v \in T_{\ups}} \cup_{c:v\in c} \{c\} \times C \times \{v\}
\times T_\downs$.

\[
\Pr(D:k) \ge \sum_{(y,x,\tilde y, \tilde x)} \frac{A(\tilde y)
B(x)}{222 M W_{A} W_{B}} \ge \sum_{x \in T_{\ups}} \frac{\gamma N
W_{A} B(x)}{222 M W_{A} W_{B}}
 \ge \frac{\gamma^2 N W_{B}}{2.22 \cdot 10^{10} M W_{A}}
\ge \frac{\gamma^2 M}{1.1 \cdot 10^{11} M} = \frac{\gamma^2}{1.1
\cdot 10^{11}}
\]
where we used the fact that each variable appears in the formula
$5$ times.
\end{proof}
%
%
%
%
%
%

%
%
%
\end{document}